\tikzstyle{every picture}=[baseline=-0.25em]
\tikzstyle{none}=[inner sep=0mm]
\tikzstyle{zxnode}=[shape=circle, minimum width=.25cm, inner sep=0.5pt, font=\scriptsize, draw=black]
\tikzstyle{gn}=[zxnode ,fill=green]
\tikzstyle{rn}=[zxnode ,fill=red]
\tikzstyle{H box}=[rectangle,fill=yellow,draw=black,xscale=1,yscale=1,font=\small,inner sep=0.75pt,minimum width=0.15cm,minimum height=0.15cm]
\tikzstyle{ug}=[regular polygon, regular polygon sides=3, fill=red,draw=black,inner sep = 0pt,minimum width=1em]
\tikzstyle{black dot}=[inner sep=0.7mm,minimum width=0pt,minimum height=0pt,fill=black,draw=black,shape=circle]
\tikzstyle{dot}=[black dot]
\tikzstyle{white dot}=[dot,fill=white]
\tikzstyle{zwcross}=[diamond, draw, fill=gray, minimum width=0em, inner sep=1.5pt]
\tikzstyle{st}=[star,star points = 5, fill=white,draw=black,inner sep = 1.2pt,line width=1.2pt]
\tikzstyle{none}=[inner sep=0mm]
\tikzstyle{every loop}=[]
\newcommand{\half}{\begin{tikzpicture}
		\node [style=st] (0) at (0,0) {};
\end{tikzpicture}}
\newcommand{\two}{\begin{tikzpicture}
		\draw (0,0) circle (0.25) ;
\end{tikzpicture}}
\newcommand{\nmcrossI}[2]{\scalebox{0.8}{\begin{tikzpicture}
	\begin{pgfonlayer}{nodelayer}
		\node [style=none] (0) at (0.25, 0.5) {};
		\node [style=none] (1) at (0.25, -0.5) {};
		\node [style=none] (2) at (0.5, 0.5) {};
		\node [style=none] (3) at (0.5, -0.5) {};
		\node [style=none] (4) at (0, -0.5) {};
		\node [style=none] (5) at (-0.5, -0.5) {};
		\node [style=none] (6) at (-1, -0.5) {};
		\node [style=none] (7) at (-1.5, -0.5) {};
		\node [style=none] (8) at (0, 0.5) {};
		\node [style=none] (9) at (-0.5, 0.5) {};
		\node [style=none] (10) at (-1, 0.5) {};
		\node [style=none] (11) at (-1.5, 0.5) {};
		\node [style=none] (12) at (-1.25, -0.5) {$~\cdots~$};
		\node [style=none] (13) at (-1.25, 0.5) {$~\cdots~$};
		\node [style=none] (14) at (-0.2500002, 0.5) {$~\cdots~$};
		\node [style=none] (15) at (-0.2500002, -0.5) {$~\cdots~$};
		\node [style=none] (16) at (-1.25, 0.75) {$#1$};
		\node [style=none] (17) at (-0.2500002, 0.75) {$#2$};
	\end{pgfonlayer}
	\begin{pgfonlayer}{edgelayer}
		\draw (0.center) to (1.center);
		\draw (2.center) to (3.center);
		\draw [in=90, out=-90, looseness=0.75] (8.center) to (6.center);
		\draw [in=90, out=-90, looseness=0.75] (9.center) to (7.center);
		\draw [in=90, out=-90, looseness=0.75] (11.center) to (5.center);
		\draw [in=90, out=-90, looseness=0.75] (10.center) to (4.center);
	\end{pgfonlayer}
\end{tikzpicture}}}
\newcommand{\nmcross}[2]{\scalebox{0.8}{\begin{tikzpicture}
	\begin{pgfonlayer}{nodelayer}
		\node [style=none] (0) at (0.7500001, -0.5) {};
		\node [style=none] (1) at (0.2499998, -0.5) {};
		\node [style=none] (2) at (-0.2499998, -0.5) {};
		\node [style=none] (3) at (-0.7500001, -0.5) {};
		\node [style=none] (4) at (0.7500001, 0.5) {};
		\node [style=none] (5) at (0.2499998, 0.5) {};
		\node [style=none] (6) at (-0.2499998, 0.5) {};
		\node [style=none] (7) at (-0.7500001, 0.5) {};
		\node [style=none] (8) at (-0.5000002, -0.5) {$~\cdots~$};
		\node [style=none] (9) at (-0.5000002, 0.5) {$~\cdots~$};
		\node [style=none] (10) at (0.5000002, 0.5) {$~\cdots~$};
		\node [style=none] (11) at (0.5000002, -0.5) {$~\cdots~$};
		\node [style=none] (12) at (-0.5000002, 0.7499999) {$#1$};
		\node [style=none] (13) at (0.5000002, 0.7499999) {$#2$};
	\end{pgfonlayer}
	\begin{pgfonlayer}{edgelayer}
		\draw [in=90, out=-90, looseness=0.75] (4.center) to (2.center);
		\draw [in=90, out=-90, looseness=0.75] (5.center) to (3.center);
		\draw [in=90, out=-90, looseness=0.75] (7.center) to (1.center);
		\draw [in=90, out=-90, looseness=0.75] (6.center) to (0.center);
	\end{pgfonlayer}
\end{tikzpicture}}}
\newtheorem{thmpart}{Part}
\def\fig{}
\def \zx {\textnormal{ZX}\xspace}
\def \zxcalc {\text{ZX}-\text{Cal}\-\text{culus}\xspace}
\def \zxtcalc {\text{ZX$_{\pi/4}$}-\text{Cal}\-\text{culus}\xspace}
\def \zw {\textnormal{ZW}\xspace}
\def \zwcalc {\text{ZW}-\text{Cal}\-\text{culus}\xspace}
\def \zxt {\textnormal{ZX$_{\pi/4}$}\xspace}
\def \zwh {\textnormal{ZW$_{1/2}$}\xspace}
\def \zwhcalc {\text{ZW$_{1/2}$}-\text{Cal}\-\text{culus}\xspace}
\newcommand{\fit}[1] {\resizebox{\columnwidth}{!}{#1}}
\newcommand{\frag}[1]{$\frac{\pi}{#1}$-frag\-ment}
\newcommand{\titlerule}[1]{\begin{minipage}{\columnwidth}\begin{center}
\rule{(\textwidth-\widthof{#1})/2}{0.5pt}#1\rule{(\textwidth-\widthof{#1})/2}{0.5pt}
\end{center}
\vspace{10em}
\end{minipage}
\vspace{-10.5em}}
\newcommand{\interp}[1] {\left\llbracket #1 \right\rrbracket}
\newcommand{\interpwx}[1]{\interp{#1}_{WX}}
\newcommand{\interpxw}[1]{\interp{#1}_{XW}}
\newcommand{\dblinterp}[1]{\interp{#1}^{\natural}}
\newcommand{\annoted}[3]{{\scriptstyle #1}\left\lbrace\mathrlap{\phantom{#3}}\right.\overbrace{#3}^{#2}}
\def\piq{e^{i\frac{\pi}{4}}}
\newcommand{\callrule}[2]{\hyperlink{r:#1}{\textnormal{(#2)}}\xspace}
\newcommand{\so}{\callrule{rules}{S1}}
\newcommand{\st}{\callrule{rules}{S2}}
\newcommand{\sth}{\callrule{rules}{S3}}
\newcommand{\e}{\callrule{rules}{E}}
\newcommand{\bo}{\callrule{rules}{B1}}
\newcommand{\bt}{\callrule{rules}{B2}}
\newcommand{\kt}{\callrule{rules}{K}}
\newcommand{\eu}{\callrule{rules}{EU}}
\newcommand{\h}{\callrule{rules}{H}}
\newcommand{\supp}{\callrule{rules}{SUP}}
\newcommand{\com}{\callrule{rules}{C}}
\newcommand{\nfi}{\callrule{rules}{BW}}
\newcommand{\eq}[2][~~]{
#1
\underset{\substack{#2}}{=}
#1
}
\renewcommand*{\arraystretch}{0.8}
\newcounter{steps}%
\newcommand\step{\refstepcounter{steps}\thesteps}%
\def\thesteps{\ensuremath{\roman{steps})}}%
\title{A Complete Axiomatisation of the ZX-Calculus for Clifford+T Quantum Mechanics}        
\author{
Emmanuel Jeandel
\and Simon Perdrix
\and Renaud Vilmart
\institute{Universit\'e de Lorraine, CNRS, Inria, LORIA, F 54000 Nancy, France}
\\\email{emmanuel.jeandel@loria.fr}$\quad$
\email{simon.perdrix@loria.fr}$\quad$
\email{renaud.vilmart@loria.fr}
}
\begin{document}

\maketitle

\begin{abstract}
We introduce the first complete and approximatively universal diagrammatic language for quantum mechanics.  We make the \zxcalc, a diagrammatic language introduced by Coecke and Duncan, complete for the so-called Clifford+T quantum mechanics by adding two new axioms to the language. The completeness of the \zxcalc for Clifford+T quantum mechanics was one of the main open questions in \emph{categorical quantum mechanics}. 
We prove the completeness of the \frag{4} of the \zxcalc using the recently studied \zwcalc, a calculus dealing with integer matrices.  We also prove that  the \frag{4} of the  \zxcalc represents exactly all the matrices over some finite dimensional extension of the ring of dyadic rationals.
\end{abstract}

\section{Introduction}

The \zxcalc is a powerful graphical language for quantum reasoning and quantum computing introduced by Bob Coecke and Ross Duncan \cite{interacting}. 
The language comes with a way of interpreting any ZX-diagram as a matrix -- called the \emph{standard interpretation}. Two diagrams represent the same quantum evolution when they have the same standard interpretation. The language is also equipped with a set of axioms -- transformation rules -- which are sound, i.e. they preserve the standard interpretation. Their purpose is to explain how a diagram can be transformed into an equivalent one.

The ZX-calculus has several applications in 
quantum information processing \cite{picturing-qp} (e.g. measure\-ment-based quantum computing  \cite{mbqc,horsman2011quantum,duncan2013mbqc}, quantum codes \cite{verifying-color-code,duncan2014verifying,chancellor2016coherent,de2017zx}, foundations \cite{toy-model-graph,duncan2016hopf}), and can be used through the interactive theorem prover Quantomatic \cite{quanto,kissinger2015quantomatic}. However, the main obstacle to wider use of the ZX-calculus was the absence of a \emph{completeness} result for a \emph{universal} fragment of quantum mechanics, in order to guarantee that any true property is provable using the ZX-calculus. More precisely, 
the language would be complete if, given any two diagrams representing the same matrix, 
one could transform one diagram into the other using the axioms of the language. 
Completeness is crucial,  it means in particular that all the fundamental properties of quantum mechanics are captured by the graphical rules.

\zxcalc has been proved to be incomplete in general \cite{incompleteness}, and despite the necessary axioms that have since been identified \cite{supplementarity,gen-supp}, the language remained incomplete. 
However, several fragments of the language have been proved to be complete (\frag{2} \cite{pi_2-complete}; $\pi$-fragment \cite{pivoting}; single-qubit \frag{4} \cite{pi_4-single-qubit}), but none of them are \emph{universal} for quantum mechanics, even approximatively. In particular all quantum algorithms expressible in these fragments are efficiently simulable on a classical computer. 

As a consequence, most of the attention has been paid to find a complete axiomatisation of the \frag{4} of the \zxcalc for the Clifford+T quantum mechanics, the simplest approximatively universal fragment of quantum mechanics, which is widely used in quantum computing.

{\bf Our Approach.} In the following, we introduce the first complete axiomatisation of the ZX-calculus for Clifford+T quantum mechanics, 
thanks to the help of the \zwcalc, another graphical language -- based on the interactions of the so-called GHZ and W states \cite{ghz-w}.
 The \zwcalc has been proved to be complete \cite{zw} but its diagrams only represent matrices over $\mathbb{Z}$, and hence is not approximatively universal. We introduce the ZW$_{1/2}$-calculus, a simple extension of the \zwcalc which remains  complete and in which any matrix over the dyadic rational numbers can be represented. 
We then introduce two interpretations from the \zxcalc  to  the ZW$_{1/2}$-calculus and back. Thanks to these interpretations, we derive the completeness of the \frag 4 of the \zxcalc from the completeness of the ZW$_{1/2}$-calculus. Notice that the interpretation of ZX-diagrams (which represent complex matrices)  into ZW$_{1/2}$-diagrams (which represent dyadic rationals) requires a non trivial encoding. 
Notice also that this approach provides a completion procedure. Roughly speaking each axiom of the ZW$_{1/2}$-calculus generates an equation in the ZX-calculus: if this equation is not already provable using the existing axioms of the ZX-calculus one can treat this equality as an new axiom. A great part of the work has been to reduce all these equalities to only two additional axioms for the language.

{\bf Related works.} The first version of the present paper has been uploaded on Arxiv in May 2017. 
 In the following weeks,  Hadzihasanovic \cite{Amar} independently introduced in his PhD thesis the ZW$_{\mathbb C}$-calculus, an extension of the \zwcalc, which is universal and complete for complex matrices. Notice that the ZW$_{\mathbb C}$-calculus does not capture the peculiar properties of Clifford+T quantum mechanics, and hence the use of the ZW$_{1/2}$ remains crucial in the proof of the completeness of the ZX-calculus for this fragment. 
Based upon our work and Hadzihasanovic's, Ng and Wang \cite{NgWang} have then introduced a complete axiomatisation of the ZX-calculus for the full quantum mechanics. Their approach consists in deriving the completeness of the ZX-calculus from the completeness of the ZW$_{\mathbb C}$-calculus, using a completion procedure based on the back and forth interpretations. In \cite{JPV-universal}, we improved this result, showing that a single additional axiom is sufficient  to make the ZX-calculus complete in general, whereas 22 new axioms together with two additional generators are used in \cite{NgWang}. 
Ng and Wang uploaded afterwards a note \cite{NgWang-clifford+t} on Arxiv providing an alternative complete axiomatisation of the ZX-calculus for Clifford+T quantum mechanics. Their axiomatisation is using two additional generators and significantly more axioms  than the axiomatisation given in the present paper.

The paper is structured as follows: A \zxcalc augmented with two new axioms is presented in Section \ref{sec:ZX}. Section \ref{sec:bird} gives a general overview of the completeness proof. In Section \ref{section:zw}, we introduce an extension of the \zwcalc that deals with matrices over dyadic rational numbers $\mathbb{D} = \mathbb{Z}[1/2]$ and show its completeness. Sections \ref{section:ZXZW} and \ref{section:ZWZX} are presenting a back and forth translation between the ZX- and ZW-calculi, from which we deduce the completeness of the \zxcalc for Clifford+T quantum mechanics in section \ref{section:complete}. In Section \ref{section:exp}, we characterise the exact expressive power of the \frag{4} of the \zxcalc: the diagrams of this fragment represent exactly the matrices over $\mathbb{D}[e^{i\frac{\pi}{4}}]$. In section  \ref{sec:dicussion}, we briefly discuss the interpretation of the two new axioms of the language.

\section{\zxcalc}
\label{sec:ZX}

\subsection{Diagrams and standard interpretation}

A ZX-diagram $D:k\to l$ with $k$ inputs and $l$ outputs is generated by:
\begin{center}
\bgroup
\def\arraystretch{2.5}
{\begin{tabular}{|cc|cc|}
\hline
$R_Z^{(n,m)}(\alpha):n\to m$ & 
\InputIfFileExists{gn-alpha.tikz}{}{\input{./figures/gn-alpha.tikz}}
 & $R_X^{(n,m)}(\alpha):n\to m$ & 
\InputIfFileExists{rn-alpha.tikz}{}{\input{./figures/rn-alpha.tikz}}
\\[4ex]\hline
$H:1\to 1$ & 
\begin{tikzpicture}
	\begin{pgfonlayer}{nodelayer}
		\node [style={H box}] (0) at (0, 0) {};
		\node [style=none] (1) at (0, 0.5) {};
		\node [style=none] (2) at (0, -0.5) {};
	\end{pgfonlayer}
	\begin{pgfonlayer}{edgelayer}
		\draw (2.center) to (1.center);
	\end{pgfonlayer}
\end{tikzpicture}
}
 & $e:0\to 0$ & 
\InputIfFileExists{empty-diagram.tikz}{}{\input{./figures/empty-diagram.tikz}}
\\\hline
$\mathbb{I}:1\to 1$ & 
\begin{tikzpicture}
	\begin{pgfonlayer}{nodelayer}
		\node [style=none] (0) at (0, 0.2499999) {};
		\node [style=none] (1) at (0, -0.2499999) {};
	\end{pgfonlayer}
	\begin{pgfonlayer}{edgelayer}
		\draw (0.center) to (1.center);
	\end{pgfonlayer}
\end{tikzpicture}}
 & $\sigma:2\to 2$ & 
\InputIfFileExists{crossing.tikz}{}{\input{./figures/crossing.tikz}}
\\\hline
$\epsilon:2\to 0$ & 
\begin{tikzpicture}
	\begin{pgfonlayer}{nodelayer}
		\node [style=none] (0) at (-0.2500001, 0.2500001) {};
		\node [style=none] (1) at (0.2500001, 0.2500001) {};
	\end{pgfonlayer}
	\begin{pgfonlayer}{edgelayer}
		\draw [bend right=90, looseness=1.75] (0.center) to (1.center);
	\end{pgfonlayer}
\end{tikzpicture}}
 & $\eta:0\to 2$ & 
\begin{tikzpicture}
	\begin{pgfonlayer}{nodelayer}
		\node [style=none] (0) at (-0.2500001, -0) {};
		\node [style=none] (1) at (0.2500001, -0) {};
	\end{pgfonlayer}
	\begin{pgfonlayer}{edgelayer}
		\draw [bend left=90, looseness=1.75] (0.center) to (1.center);
	\end{pgfonlayer}
\end{tikzpicture}}
\\\hline
\end{tabular}}
\egroup\\
where $n,m\in \mathbb{N}$, $\alpha \in \mathbb{R}$, and the generator $e$ is the empty diagram.
\end{center}
and the two compositions:
\begin{itemize}
\item Spacial Composition: for any $D_1:a\to b$ and $D_2:c\to d$, $D_1\otimes D_2:a+c\to b+d$ consists in placing $D_1$ and $D_2$ side by side, $D_2$ on the right of $D_1$.
\item Sequential Composition: for any $D_1:a\to b$ and $D_2:b\to c$, $D_2\circ D_1:a\to c$ consists in placing $D_1$ on the top of $D_2$, connecting the outputs of $D_1$ to the inputs of $D_2$.
\end{itemize}

The standard interpretation of the ZX-diagrams associates to any diagram $D:n\to m$ a linear map $\interp{D}:\mathbb{C}^{2^n}\to\mathbb{C}^{2^m}$ inductively defined as follows:\\
\titlerule{$\interp{.}$}
\[ \interp{D_1\otimes D_2}:=\interp{D_1}\otimes\interp{D_2} \qquad 
\interp{D_2\circ D_1}:=\interp{D_2}\circ\interp{D_1}\qquad\interp{
\InputIfFileExists{empty-diagram.tikz}{}{\input{./figures/empty-diagram.tikz}}
~}:=\begin{pmatrix}
1
\end{pmatrix} \qquad
\interp{~
}
~~}:= \begin{pmatrix}
1 & 0 \\ 0 & 1\end{pmatrix}\]
\[\interp{~
}
~}:= \frac{1}{\sqrt{2}}\begin{pmatrix}1 & 1\\1 & -1\end{pmatrix}\qquad
\interp{
\InputIfFileExists{crossing.tikz}{}{\input{./figures/crossing.tikz}}
}:= \begin{pmatrix}
1&0&0&0\\
0&0&1&0\\
0&1&0&0\\
0&0&0&1
\end{pmatrix} \qquad
\interp{\raisebox{-0.35em}{$
}
$}}:= \begin{pmatrix}
1\\0\\0\\1
\end{pmatrix}\qquad
\interp{\raisebox{-0.25em}{$
}
$}}:= \begin{pmatrix}
1&0&0&1
\end{pmatrix}\]
\[
\interp{\begin{tikzpicture}
	\begin{pgfonlayer}{nodelayer}
		\node [style=gn] (0) at (0, -0) {$\alpha$};
	\end{pgfonlayer}
\end{tikzpicture}}:=\begin{pmatrix}1+e^{i\alpha}\end{pmatrix} \qquad
\interp{
\InputIfFileExists{gn-alpha.tikz}{}{\input{./figures/gn-alpha.tikz}}
}:=
\annoted{2^m}{2^n}{\begin{pmatrix}
  1 & 0 & \cdots & 0 & 0 \\
  0 & 0 & \cdots & 0 & 0 \\
  \vdots & \vdots & \ddots & \vdots & \vdots \\
  0 & 0 & \cdots & 0 & 0 \\
  0 & 0 & \cdots & 0 & e^{i\alpha}
 \end{pmatrix}}
~~\begin{pmatrix}n+m>0\end{pmatrix} 
\]
For any $n,m\geq 0$ and $\alpha\in\mathbb{R}$:\\
\begin{minipage}{\columnwidth}
\[\scalebox{0.9}{$\interp{
\InputIfFileExists{rn-alpha.tikz}{}{\input{./figures/rn-alpha.tikz}}
}=\interp{~
}
~}^{\otimes m}\circ \interp{
\InputIfFileExists{gn-alpha.tikz}{}{\input{./figures/gn-alpha.tikz}}
}\circ \interp{~
}
~}^{\otimes n}$}\] \\
$\left(\text{where }M^{\otimes 0}=\begin{pmatrix}1\end{pmatrix}\text{ and }M^{\otimes k}=M\otimes M^{\otimes k-1}\text{ for any }k\in \mathbb{N}^*\right)$.\\
\rule{\columnwidth}{0.5pt}
\end{minipage}\\

To simplify, the red and green nodes will be represented empty when holding a 0 angle:
\[ \scalebox{0.9}{
\InputIfFileExists{gn-empty-is-gn-zero.tikz}{}{\input{./figures/gn-empty-is-gn-zero.tikz}}
} \qquad\text{and}\qquad \scalebox{0.9}{
\InputIfFileExists{rn-empty-is-rn-zero.tikz}{}{\input{./figures/rn-empty-is-rn-zero.tikz}}
} \]
Also in order to make the diagrams a little less heavy, when $n$ copies of the same sub-diagram occur, we will use the notation $(.)^{\otimes n}$.


ZX-Diagrams are universal:
\[\forall A\in \mathbb{C}^{2^n}\times\mathbb{C}^{2^m},~~\exists D:n\to m,~~ \interp{D}=A\]
This implies dealing with an uncountable set of angles, so it is generally preferred to work with \textit{approximate} universality -- the ability to approximate any linear map with arbitrary accuracy -- in which only a finite set of angles is involved. The \frag{4} -- ZX-diagrams where all angles are multiples of $\frac{\pi}{4}$ -- is one such approximately universal fragment, whereas the \frag{2} is not \cite{clifford-not-universal}.

\subsection{Calculus}

The diagrammatic representation of a matrix is not unique in the \zxcalc. As a consequence the language comes with a set of axioms. Additionally to the axioms of the language described in Figure \ref{fig:ZX_rules}, one can:
\begin{itemize}
\item bend any wire  of a ZX-diagram at will, without changing its semantics. This paradigm -- the so-called \textbf{Only Topology Matters} -- can be derived from the following axioms:
\[{
\InputIfFileExists{bent-wire-1.tikz}{}{\input{./figures/bent-wire-1.tikz}}
}\]
\[{
\InputIfFileExists{bent-wire-2.tikz}{}{\input{./figures/bent-wire-2.tikz}}
}\]
\item apply the axioms to sub-diagrams. If $\zx\vdash D_1=D_2$ then, for any diagram $D$ with the appropriate number of inputs and outputs:
\begin{itemize}
\item $\zx\vdash D_1\circ D = D_2\circ D$
\item $\zx\vdash D\circ D_1 = D\circ D_2$
\item $\zx\vdash D_1\otimes D = D_2\otimes D$
\item $\zx\vdash D\otimes D_1 = D\otimes D_2$
\end{itemize}
where $\zx\vdash D_1 = D_2$ means that $D_1$ can be transformed into $D_2$ using the axioms of the \zxcalc. 
\end{itemize}
\begin{figure*}[!htb]
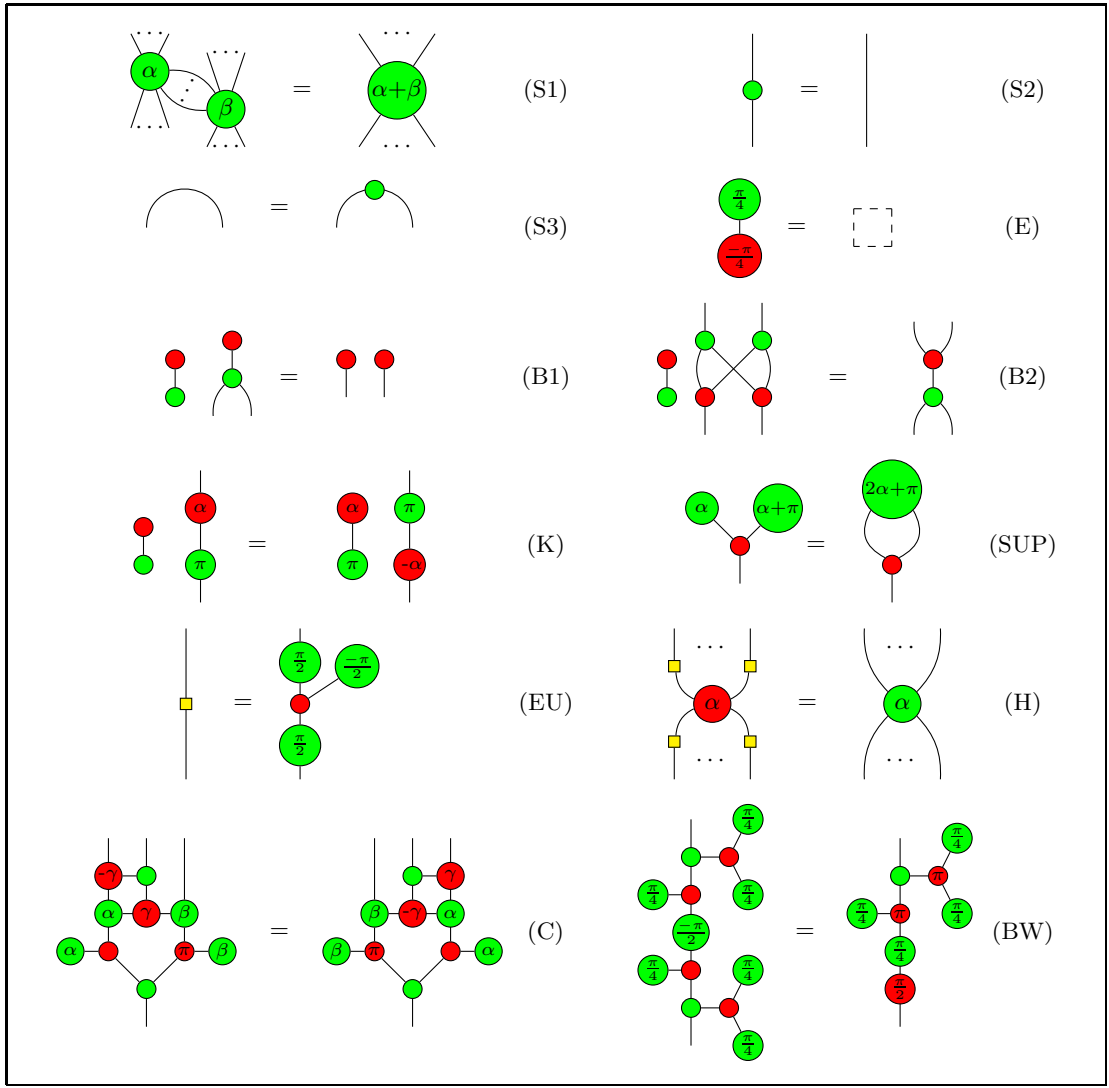

\vspace{4em}
\vfill
 \centering
 \hypertarget{r:rules}{}
  \begin{tabular}{|@{$\qquad$}ccccc@{$\qquad$}|}
   \hline
   &&$\qquad$&& \\ 
   
\InputIfFileExists{spider-1.tikz}{}{\input{./figures/spider-1.tikz}}
&(S1)&&
\InputIfFileExists{s2-simple.tikz}{}{\input{./figures/s2-simple.tikz}}
&(S2)\\
   &&&& \\
   
\InputIfFileExists{induced_compact_structure-2wire.tikz}{}{\input{./figures/induced_compact_structure-2wire.tikz}}
&(S3)&&
\InputIfFileExists{bicolor_pi_4_eq_empty.tikz}{}{\input{./figures/bicolor_pi_4_eq_empty.tikz}}
&(E)\\
   &&&&\\
   
\InputIfFileExists{b1s.tikz}{}{\input{./figures/b1s.tikz}}
&(B1) && 
\InputIfFileExists{b2s.tikz}{}{\input{./figures/b2s.tikz}}
&(B2)\\
   &&&& \\
   
\InputIfFileExists{k2s.tikz}{}{\input{./figures/k2s.tikz}}
&(K) && 
\InputIfFileExists{former-supp.tikz}{}{\input{./figures/former-supp.tikz}}
&(SUP)\\
   &&&& \\
   
\InputIfFileExists{euler-decomp-scalar-free.tikz}{}{\input{./figures/euler-decomp-scalar-free.tikz}}
&(EU) &&  
\InputIfFileExists{h2.tikz}{}{\input{./figures/h2.tikz}}
&(H)\\
   &&&& \\
   
\InputIfFileExists{commutation-of-controls-general-simplified.tikz}{}{\input{./figures/commutation-of-controls-general-simplified.tikz}}
&(C) &&
\InputIfFileExists{BW-simplified.tikz}{}{\input{./figures/BW-simplified.tikz}}
&(BW)\\
   &&&& \\
   \hline
  \end{tabular}
 \caption[]{Set of rules for the \zxcalc with scalars. All of these rules also hold when flipped upside-down, or with the colours red and green swapped. The right-hand side of (E) is an empty diagram. (...) denote zero or more wires, while (\protect\rotatebox{45}{\raisebox{-0.4em}{$\cdots$}}) denote one or more wires.\\}\vspace{4em}
\label{fig:ZX_rules}
\end{figure*}
Equalities between ZX-diagrams have the two following interesting properties:
\begin{itemize}
\item ``Colour-swapping'' (exchanging red and green dots) preserves the equality.
\item Multiplying all the angles by $-1$ preserves the equality (see Lemma \ref{lem:-1-interp}).
\end{itemize}

In the following, \zxt will denote the entire \frag4 of the ZX-Calculus with the set of rules in figure \ref{fig:ZX_rules}.

\subsection{What's new?}

We introduce in this paper a new axiomatisation of the \zxcalc. We briefly review here the differences with the previous version of the \zxcalc. 
Since we are only interested in the \frag{4} of the \zxcalc in this paper, all the axioms which are not expressible with angles multiple of $\pi/4$, like the cyclotomic supplementarity \cite{gen-supp}, are ignored. However, the rule \e, also introduced in \cite{gen-supp}, is specific to the \frag4, and hence appears in the set of rules. The two axioms (\com, \nfi) given in Figure \ref{fig:ZX_rules} are new axioms, for which we do not know any derivation using the previous axiomatisations of the language.

The two rules are of a different kind. On the one hand, \nfi is specific to the \frag4, and is hardly understandable as is. On the other hand, \com is parametrised by 3 different angles, and the rule holds whatever these angles are.

\subsection{Soundness and Completeness}

It's routine to prove the soundness of the axioms of the \zxcalc given in Figure~\ref{fig:ZX_rules}. The main result of the paper is the completeness of this axiomatisation for Clifford+T quantum mechanics:

\begin{theorem}
  \label{thm:main}
  The \frag 4 of the \zxcalc as presented in Figure~\ref{fig:ZX_rules} is
  complete: for any two diagrams $D_1, D_2$ in the \frag4 of the ZX-Calculus,
  $\interp{D_1} = \interp{D_2}$ iff $\zxt \vdash D_1 = D_2$. 
\end{theorem}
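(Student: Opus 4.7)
The plan is to derive Theorem \ref{thm:main} from the completeness of a dyadic-rational extension of the \zwcalc, following the roadmap announced in the introduction. Soundness of the axioms in Figure \ref{fig:ZX_rules} is routine; the whole difficulty lies in completeness, which I reduce to a statement in the \zwhcalc.

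First, I would construct and prove complete the \zwhcalc. Starting from the complete \zwcalc, which represents only integer matrices, I would add a scalar generator for $1/2$ together with the minimal axiom set needed to keep the system sound. Completeness can then be bootstrapped from the integer case: given two \zwh-diagrams with equal interpretation, multiply by a common power of $2$ to clear denominators, apply the integer-matrix completeness of \zwcalc to the numerators, and push the inverse scalar back through the new axioms to recover a derivation in \zwh.

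Next I would design a pair of translations $\interpxw{\cdot}:\zxt\to\zwh$ and $\interpwx{\cdot}:\zwh\to\zxt$. The subtle point is that \zwh-diagrams realise matrices over $\mathbb{D}=\mathbb{Z}[1/2]$ whereas \zxt-diagrams realise matrices over $\mathbb{D}[e^{i\pi/4}]$, so a complex scalar of the form $a+b\,e^{i\pi/4}+c\,e^{i\pi/2}+d\,e^{i 3\pi/4}$ must be encoded as a block of four real entries; concretely, every wire of a \zxt-diagram is replaced in $\interpxw{\cdot}$ by a fixed number of extra wires carrying this algebraic encoding, and each generator is sent to a \zwh-diagram realising the corresponding block matrix. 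Soundness of both translations at the semantic level is mechanical.

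The heart of the proof is step three: for every axiom of the \zwhcalc, the image under $\interpwx{\cdot}$ of its two sides must be provably equal in \zxt using the rules in Figure \ref{fig:ZX_rules}. This is exactly where the two new axioms \com and \nfi enter the picture — they are among the equations produced by the translation that are not derivable from the older ZX-calculus axioms, and the work is to show that together with the existing axioms they do suffice. I also need a round-trip lemma stating $\zxt\vdash\interpwx{\interpxw{D}}=D$ up to collapsing the encoding/decoding gadgets. With those ingredients the conclusion is immediate: given $D_1,D_2$ with $\interp{D_1}=\interp{D_2}$, the diagrams $\interpxw{D_1}$ and $\interpxw{D_2}$ have equal \zwh-interpretation and are therefore provably equal in \zwh by completeness; translating that derivation back via $\interpwx{\cdot}$ yields $\zxt\vdash\interpwx{\interpxw{D_1}}=\interpwx{\interpxw{D_2}}$, and the round-trip lemma collapses this to $\zxt\vdash D_1=D_2$. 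The main obstacle is thus the axiom-by-axiom verification in step three: isolating a short list of new \zxt-axioms — ideally as small as \com plus \nfi — simultaneously strong enough to prove the translations of every \zwh axiom and of the round-trip identities.
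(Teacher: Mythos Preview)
Your proposal is correct and follows essentially the same route as the paper: complete the \zwhcalc by adjoining a $1/2$ scalar, build the two translations, verify that $\interpwx{\cdot}$ carries each \zwh axiom to a \zxt-derivable equation (this is indeed where \com and \nfi are needed), establish the round-trip identity, and chain these four ingredients exactly as you describe. The only minor discrepancy is the shape of the encoding: the paper's $\interpxw{\cdot}$ adds two \emph{global} control wires to the whole diagram (so $k\to l$ becomes $k{+}2\to l{+}2$) rather than widening each wire separately, which makes the treatment of $\otimes$ slightly delicate since $\psi(A\otimes B)\neq\psi(A)\otimes\psi(B)$, but this does not affect your high-level plan.
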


The five following sections of the paper are devoted to the proof of the theorem. A general overview of the proof is given in the next section. 

\section{A Bird's Eye View of the Proof of Theorem \ref{thm:main}}
\label{sec:bird}

The proof uses the completeness result of the \zwcalc, a
calculus dealing with matrices with integer coefficients.
The syntax and semantics of the \zwcalc are presented in
section~\ref{section:zw}.

We start by slightly changing the \zwcalc to obtain a new
language, the \zwh-calculus, that is able to express matrices with
dyadic rational coefficients, i.e. rational numbers of the form $p/2^q$.
This is done merely by adding a symbolic inverse to the scalar $2$. We
then prove that this new language is complete:

\begin{thmpart}[Proposition~\ref{prop:zwcomplete}]
\label{prop:ZWcomplete}  
The \zwh-calculus is complete: for two diagrams $D_1, D_2$ of
the \zwh-calculus, we have
  $\interp{D_1} = \interp{D_2}$ iff $\zwh \vdash D_1 = D_2$. 
\end{thmpart}
This is done in subsection~\ref{subsection:dyadic}.

\paragraph{}
We now introduce two interpretations, from \zxt to \zwh and back.

First, we provide an interpretation $\interpxw{\cdot}$
from \zxt to \zwh that transforms \zxt-dia\-grams of type $k\rightarrow l$ to
\zwh-diagrams of type $k +2 \rightarrow l+2$. This interpretation
is sound in the following sense:

\begin{thmpart}[Proposition~\ref{prop:pxwpsi}]
\label{prop:ZXZW}
  Let $D_1, D_2$ be two diagrams of the \zxt-calculus.

  Then $\interp{D_1} = \interp{D_2}$ iff $\interp{\interpxw{D_1}} = \interp{\interpxw{D_2}}$.
\end{thmpart}  
The encoding is nontrivial as the \zxtcalc expresses matrices with complex
coefficients, and the \zwh-calculus is only able to express matrices with
dyadic rational coefficients. It turns out that coefficients involved in
matrices of the \zxtcalc are actually in a vector space (more
accurately a module) of dimension $4$ over the set of dyadic rational numbers,
so that every complex coefficient will be represented by a $4 \times 4$ matrix
with dyadic rational coefficients.
This encoding is done in section \ref{section:ZXZW}.

\paragraph{}

We then provide an interpretation $\interpwx{\cdot}$ from \zwh
to \zxt. This interpretation preserves both semantics and provability:

\begin{thmpart}[Proposition~\ref{prop:interpwx} and~\ref{prop:rules-preserved}]
  \label{prop:ZWZX}

  Let $D_1, D_2$ be two diagrams of the \zwh-calculus.

  Then  $\interp{\interpwx{D_i}} = \interp{D_i}$.

  Furthermore, if $\zwh \vdash D_1 = D_2$ then $\zxt \vdash
  \interpwx{D_1} =   \interpwx{D_2}$
\end{thmpart}
This is done in section \ref{section:ZWZX}.

The composition of the two interpretations does not give back the
initial diagram (we obtain after all a diagram with two more inputs
and outputs), but we can (provably) recover it. In fact

\begin{thmpart}[Corollary~\ref{cor:composition}]
\label{prop:ZXZX}

  Let $D_1, D_2$ be a diagram of the \zxcalc.
If $\zxt \vdash \interpwx{\interpxw{D_1}} = \interpwx{\interpxw{D_2}}$ then $\zxt \vdash D_1 = D_2$.
\end{thmpart}  

Our main theorem is now obvious:
\begin{proof}[Proof of Theorem \ref{thm:main}]
Let $D_1,$ $D_2$ be two diagrams of the \zxtcalc s.t. $\interp{D_1}
  = \interp{D_2}$.
  By Part~\ref{prop:ZXZW}, $\interp{\interpxw{D_1}} = \interp{\interpxw{D_2}}$.\\
  By Part~\ref{prop:ZWcomplete}, the  \zwh-calculus is complete
  and therefore
  $\zwh \vdash \interpxw{D_1} = \interpxw{D_2}$.\\
  By Part~\ref{prop:ZWZX},
  $\zxt \vdash \interpwx{\interpxw{D_1}} = \interpwx{\interpxw{D_2}}$.\\
  By Part~\ref{prop:ZXZX} this implies $\zxt \vdash D_1 = D_2$.  
\qed\end{proof}  

This approach gives a completion procedure. It gives a set of equalities between \zxt-diagrams whose derivability proves the completeness of the language. Hence, the new rules of the \zxtcalc we introduced have obviously been chosen for Parts~\ref{prop:ZXZX} and \ref{prop:ZWZX} to hold.
However they have been greatly simplified from what one can obtain using the approach naively.

\section{\zwcalc}
\label{section:zw}
\subsection{Diagrams and Standard Interpretation}

The \zwcalc has been introduced by Amar Hadzi\-hasa\-novic in 2015 \cite{zw} and is based on the GHZ/W-Calculus \cite{ghz-w}. We will present here the expanded version of this calculus. To stay coherent with the previous definition of the \zxcalc, we will assume that the time flows from top to bottom -- which is the opposite of the original definition in the ZW-Calculus. It has the following finite set of generators:
\[T_e=\left\lbrace 

\begin{tikzpicture}
	\begin{pgfonlayer}{nodelayer}
		\node [style={white dot}] (0) at (0, -0) {};
		\node [style=none] (1) at (0, 0.5) {};
		\node [style=none] (2) at (0, -0.5) {};
	\end{pgfonlayer}
	\begin{pgfonlayer}{edgelayer}
		\draw (1.center) to (2.center);
	\end{pgfonlayer}
\end{tikzpicture}}
~,~
\InputIfFileExists{Z-2-1.tikz}{}{\input{./figures/Z-2-1.tikz}}
~,~
\begin{tikzpicture}
	\begin{pgfonlayer}{nodelayer}
		\node [style=dot] (0) at (0, -0) {};
		\node [style=none] (1) at (0, 0.5) {};
		\node [style=none] (2) at (0, -0.5) {};
	\end{pgfonlayer}
	\begin{pgfonlayer}{edgelayer}
		\draw (1.center) to (2.center);
	\end{pgfonlayer}
\end{tikzpicture}}
~,~
\begin{tikzpicture}
	\begin{pgfonlayer}{nodelayer}
		\node [style=dot] (0) at (0, 0) {};
		\node [style=none] (1) at (0, 0.5000001) {};
		\node [style=none] (2) at (-0.25, -0.5) {};
		\node [style=none] (3) at (0.25, -0.5) {};
	\end{pgfonlayer}
	\begin{pgfonlayer}{edgelayer}
		\draw (0) to (2.center);
		\draw (3.center) to (0);
		\draw (0) to (1.center);
	\end{pgfonlayer}
\end{tikzpicture}}
~,\scalebox{1}{
}
}~,\raisebox{-0.3em}{
}
}~,\raisebox{-0.4em}{
}
}~,~
\InputIfFileExists{crossing.tikz}{}{\input{./figures/crossing.tikz}}
~,~
\InputIfFileExists{zw-cross.tikz}{}{\input{./figures/zw-cross.tikz}}
~,~
\InputIfFileExists{empty-diagram.tikz}{}{\input{./figures/empty-diagram.tikz}}
~
\right\rbrace\]
and diagrams are created thanks to the same two -- spacial and sequential -- compositions.

As for the \zxcalc, we define a standard interpretation, that associates to any diagram of the \zwcalc $D$ with $n$ inputs and $m$ outputs, a linear map $\interp{D}:\mathbb{Z}^{2^n}\to\mathbb{Z}^{2^m}$, inductively defined as:\\
\titlerule{$\interp{.}$}
$$ \interp{D_1\otimes D_2}:=\interp{D_1}\otimes\interp{D_2} \qquad 
\interp{D_2\circ D_1}:=\interp{D_2}\circ\interp{D_1}\qquad
\interp{
\InputIfFileExists{empty-diagram.tikz}{}{\input{./figures/empty-diagram.tikz}}
~}:=\begin{pmatrix}1\end{pmatrix} \quad
\interp{~
}
~~}:= \begin{pmatrix}
1 & 0 \\ 0 & 1\end{pmatrix}$$
$$ 
\interp{
\InputIfFileExists{crossing.tikz}{}{\input{./figures/crossing.tikz}}
}:= \begin{pmatrix}
1&0&0&0\\
0&0&1&0\\
0&1&0&0\\
0&0&0&1
\end{pmatrix} \qquad
\interp{
\InputIfFileExists{zw-cross.tikz}{}{\input{./figures/zw-cross.tikz}}
}:= \begin{pmatrix}
1&0&0&0\\
0&0&1&0\\
0&1&0&0\\
0&0&0&-1
\end{pmatrix}\qquad
\interp{\raisebox{-0.4em}{$
}
$}}:= \begin{pmatrix}
1\\0\\0\\1
\end{pmatrix}\qquad
\interp{\raisebox{-0.3em}{$
}
$}}:= \begin{pmatrix}
1&0&0&1
\end{pmatrix}$$
$$
\interp{
}
}:= \begin{pmatrix}
0&1\\1&0
\end{pmatrix} \qquad 
\interp{
}
}:= \begin{pmatrix}
0&1\\1&0\\1&0\\0&0
\end{pmatrix}\qquad
\interp{
}
}:= \begin{pmatrix}
1&0\\0&-1
\end{pmatrix} \qquad 
\interp{
\InputIfFileExists{Z-2-1.tikz}{}{\input{./figures/Z-2-1.tikz}}
}:= \begin{pmatrix}
1&0&0&0\\0&0&0&-1
\end{pmatrix}
$$
\rule{\columnwidth}{0.5pt}
This map is obviously different from the one of the \zxcalc -- the domain is different -- but we will use the same notation.

\begin{remark}
The symbols used for the generators have be altered from the original \zwcalc in order to make it more compatible with the \zxcalc.
\end{remark}

\begin{lemma}[\cite{zw}]
ZW-Diagrams are universal for matrices of $\mathbb{Z}^{2^n}\times\mathbb{Z}^{2^m}$:
\[\forall A\in \mathbb{Z}^{2^n}\times\mathbb{Z}^{2^m},~~\exists D:n\to m,~~ \interp{D}=A\]
\end{lemma}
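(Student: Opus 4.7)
The plan is to exploit the compact closed structure provided by the cups ($\epsilon$) and caps ($\eta$) to reduce the universality of matrices to universality of states. Given any integer matrix $A : \mathbb{Z}^{2^n} \to \mathbb{Z}^{2^m}$, bending the $n$ input wires down through caps produces a state $\tilde A : 0 \to n+m$ whose $2^{n+m}$ entries are exactly those of $A$ (up to reindexing of the basis), and this bending is invertible via cups. So it suffices to show that every $v \in \mathbb{Z}^{2^k}$ is realised by some diagram $D : 0 \to k$.

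I would then proceed by induction on $k$. For $k = 0$, the claim is that every integer arises as a scalar of some closed diagram. The empty diagram realises $1$; closing the Z-1-1 generator with a cap and a cup (i.e.\ its trace) realises $0$, and small variants using the ZW-cross yield $-1$ and $2$. Closure under integer addition is then obtained via a gadget built from the W-1-2 spider (which, when chained into itself, implements a sum in the computational basis), and closure under multiplication follows from the tensor product of scalars. This delivers every element of $\mathbb{Z}$. For the inductive step $k \ge 1$, decompose any target vector as $v = \ket{0} \otimes v_0 + \ket{1} \otimes v_1$ with $v_0, v_1 \in \mathbb{Z}^{2^{k-1}}$. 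By the inductive hypothesis, pick diagrams $D_0, D_1$ realising $v_0, v_1$. One next synthesises the basis states $\ket{0}, \ket{1} : 0 \to 1$ (obtainable by applying the W and Z generators to the scalar diagram realising $1$), and combines $D_0$ and $D_1$ through a W-style "branch and sum" gadget on the leading output wire so that the global interpretation is exactly $(\ket{0} \otimes v_0) + (\ket{1} \otimes v_1)$.

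The main obstacle will be verifying that this branch-and-sum gadget really computes the decomposition in the correct basis, without any spurious sign or permutation — this requires first deriving normal forms for the multi-output W (and Z) spiders from the generating 1-to-2 versions and carefully computing their standard interpretations to confirm the block-column decomposition. Once that identity is in hand the induction closes immediately, and pulling the result back through the compact-closed reduction of the first paragraph yields universality of ZW-diagrams for all matrices in $\mathbb{Z}^{2^n}\times\mathbb{Z}^{2^m}$.
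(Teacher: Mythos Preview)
The paper does not give its own proof of this lemma: it is stated with a citation to \cite{zw} and used as a black box. So there is no ``paper's proof'' to compare against; your proposal is effectively a reconstruction of Hadzihasanovic's universality argument.

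As a reconstruction, your outline is sound in its large-scale structure. The compact-closure reduction from arbitrary $n\to m$ matrices to $0\to k$ states is correct and standard, and the inductive decomposition $v=\ket{0}\otimes v_0+\ket{1}\otimes v_1$ is exactly the shape of the normal form in \cite{zw}. The base case is also fine once one locates the right scalars (e.g.\ $\epsilon\circ\eta$ gives $2$, and composing the $Z$ node with the black $\ket{1}$/$\bra{1}$ states yields $-1$).

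Where your write-up is thin is precisely where you flag the ``main obstacle'': the branch-and-sum gadget. The issue is that addition $u+v$ is not bilinear in $u\otimes v$, so there is no single $2k\to k$ map that adds arbitrary states; one must instead route a control qubit through a W-type spider and multiplex $D_0$ and $D_1$ along its branches. Concretely, in \cite{zw} this is realised by the $(k{+}1)$-ary W (black) spider together with Z (white) copy spiders, and checking that the resulting diagram has the claimed interpretation is the actual content of the proof. Your sketch gestures at this but does not exhibit the gadget or verify its semantics. If you want a self-contained argument, you should write down the $0\to k$ diagram built from $W_{1,2}$, the $\ket{0}/\ket{1}$ states, and $D_0,D_1$, and compute its interpretation explicitly; otherwise, citing \cite{zw} as the paper does is entirely appropriate here.
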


\subsection{Calculus}

\begin{figure*}[!bt]
\def\scale{0.75}
\centering
\begin{tabular}{|ccc|}
\hline
&&\\
\scalebox{\scale}{
\InputIfFileExists{ZW-rule-0-no-braid.tikz}{}{\input{./figures/ZW-rule-0-no-braid.tikz}}
} & $\qquad$ & \scalebox{\scale}{
\InputIfFileExists{ZW-rule-1.tikz}{}{\input{./figures/ZW-rule-1.tikz}}
} \\
&&\\
\scalebox{\scale}{
\InputIfFileExists{ZW-rule-2.tikz}{}{\input{./figures/ZW-rule-2.tikz}}
} && \scalebox{\scale}{
\InputIfFileExists{ZW-rule-3.tikz}{}{\input{./figures/ZW-rule-3.tikz}}
} \\
&&\\
 \scalebox{\scale}{
\InputIfFileExists{ZW-rule-4.tikz}{}{\input{./figures/ZW-rule-4.tikz}}
} && \scalebox{\scale}{
\InputIfFileExists{ZW-rule-5-no-braid.tikz}{}{\input{./figures/ZW-rule-5-no-braid.tikz}}
} \\
 &&\\
 \multicolumn{3}{|c|}{\scalebox{\scale}{
\InputIfFileExists{ZW-rule-6-no-braid.tikz}{}{\input{./figures/ZW-rule-6-no-braid.tikz}}
} $\qquad\qquad$ \scalebox{\scale}{
\InputIfFileExists{ZW-rule-X-no-braid.tikz}{}{\input{./figures/ZW-rule-X-no-braid.tikz}}
}}\\
 &&\\
 \multicolumn{3}{|c|}{
 \scalebox{\scale}{
\InputIfFileExists{ZW-rule-7-no-braid.tikz}{}{\input{./figures/ZW-rule-7-no-braid.tikz}}
} $\qquad\qquad$ \scalebox{\scale}{
\InputIfFileExists{reidmeister-3.tikz}{}{\input{./figures/reidmeister-3.tikz}}
}} \\
 &&\\
\hline
\end{tabular}
\caption{Set of rules for the \zwcalc.}
 \label{fig:ZW_rules}
\end{figure*}

The \zwcalc comes with a \emph{complete} set of rules ZW that is given in figure \ref{fig:ZW_rules}.
Here again, the paradigm \emph{Only Topology Matters} applies except for 
\InputIfFileExists{zw-cross.tikz}{}{\input{./figures/zw-cross.tikz}}
 where the order of inputs and outputs is important. It gives sense to nodes that are not directly given in $T_e$, e.g.:
\[
\InputIfFileExists{zw-bent-wire-example.tikz}{}{\input{./figures/zw-bent-wire-example.tikz}}
\]
All these rules are sound. We use the same notation $\vdash$ as defined in section \ref{sec:ZX}, and we can still apply the rewrite rules to subdiagrams. 
In the following we may use the shortcuts:
\begin{center}

\InputIfFileExists{black-dot-0-1.tikz}{}{\input{./figures/black-dot-0-1.tikz}}
 and\hspace{1em} 
\InputIfFileExists{white-dot-0-1.tikz}{}{\input{./figures/white-dot-0-1.tikz}}

\end{center}

\subsection{Extension to Dyadic Matrices}
\label{subsection:dyadic}

We define an extension of the \zwcalc by adding a new node that represents $\frac{1}{2}$ and binding it to the calculus with an additional rule.

\begin{definition}
We define the \zwhcalc as the extension of the \zwcalc such as:
\[\left\lbrace
\begin{array}{l}
 T_{1/2}=T_e\cup\{\half\}\\
\zwh=\zw\cup\left\lbrace
\InputIfFileExists{additional-ZW-rule.tikz}{}{\input{./figures/additional-ZW-rule.tikz}}
~\right\rbrace
\end{array}
\right.\]
The standard interpretation of a diagram $D:n\to m$ is now a matrix
$\interp{D}:\mathbb{D}^{2^n}\to\mathbb{D}^{2^m}$ over the
ring $\mathbb{D} = \mathbb{Z}[1/2]$ of dyadic rationals and is given by
the standard interpretation of the \zwcalc extended with $\interp{\half}:=\begin{pmatrix}\frac{1}{2}\end{pmatrix}$.
\end{definition}

\begin{proposition}
  \label{prop:zwcomplete}
  The \zwh is sound and complete:
  For two diagrams $D_1, D_2$ of the \zwh-calculus,
  $\interp{D_1} = \interp{D_2}$ iff $\zwh \vdash D_1 = D_2$.
\end{proposition}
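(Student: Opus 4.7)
The plan is to reduce completeness of \zwh to the already-established completeness of \zw by ``clearing denominators''. Soundness is routine: each axiom of \zw is sound by hypothesis, and one only needs to check that the single new axiom (the one adjoined to $\zw$ to get $\zwh$) is sound, which amounts to a one-line verification that $\tfrac{1}{2}$ interacts with the scalar $2$ of \zw as expected (e.g.\ $\half \otimes \two = e$ on the semantic side).

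For completeness, the key intermediate lemma is a normal form statement: every \zwh-diagram $D : n \to m$ can be provably rewritten as
\[ \zwh \;\vdash\; D \;=\; \half^{\otimes k} \otimes D' \]
for some $k \in \mathbb{N}$ and some pure \zw-diagram $D'$ (containing no $\half$ node). This is essentially an induction on the number of $\half$ generators in $D$: since $\half$ has type $0 \to 0$, it is a scalar, and the topological paradigm enjoyed by the \zw-calculus lets one slide any scalar outside any subdiagram by bending wires. Concretely, one shows by structural induction that $E \otimes s = s \otimes E$ is derivable for any diagram $E$ and any scalar $s$, which permits collecting all $\half$'s to the left.

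Once both diagrams are in normal form, write $D_i = \half^{\otimes k_i} \otimes D_i'$ for $i=1,2$, and assume WLOG $k_1 \leq k_2$. The semantic hypothesis $\interp{D_1} = \interp{D_2}$ becomes $2^{k_2-k_1}\interp{D_1'} = \interp{D_2'}$, a genuine identity of integer matrices. Define $\widetilde{D_1'} := \two^{\otimes (k_2-k_1)} \otimes D_1'$; this is a pure \zw-diagram with $\interp{\widetilde{D_1'}} = \interp{D_2'}$, so by completeness of \zw (Hadzihasanovic's theorem) we obtain $\zw \vdash \widetilde{D_1'} = D_2'$, hence a fortiori $\zwh \vdash \widetilde{D_1'} = D_2'$. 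Using $k_2-k_1$ applications of the new axiom $\half \otimes \two = e$ to cancel the extra scalars, one concludes
\[ \zwh \;\vdash\; D_1 = \half^{\otimes k_1} \otimes D_1' = \half^{\otimes k_2} \otimes \widetilde{D_1'} = \half^{\otimes k_2} \otimes D_2' = D_2. \]

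The main obstacle I anticipate is the normal form step: rigorously verifying that a scalar $\half$ can always be factored out through every constructor of the calculus (Z and W nodes of arbitrary arities, crossings, caps, cups, the sign-crossing) using only the listed rules of \zw. This is morally a consequence of ``only topology matters'', but a careful inductive argument, with explicit derivations of the scalar-sliding lemma, is required. The rest of the argument — the arithmetic of $k_1,k_2$ and the appeal to \zw-completeness — is essentially bookkeeping once the normal form is in hand.
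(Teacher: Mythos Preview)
Your proposal is correct and follows essentially the same route as the paper: factor out the $\half$'s as scalar tensor factors, clear denominators, and invoke completeness of \zw on the resulting integer-matrix equality. The only difference is emphasis: the paper treats the normal form $D_i = d_i \otimes \half^{\otimes n_i}$ as immediate, whereas you flag the scalar-sliding step as the main obstacle---but this step requires only the structural axioms of a symmetric monoidal category (functoriality of $\otimes$, unit laws, naturality of the symmetry), not any \zw-specific rule, so it is indeed as routine as the paper suggests.
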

\begin{proof}
  Soundness is obvious.
  
Now let $D_1$ and $D_2$ be two diagrams of the \zwh-Calculus such that $\interp{D_1}=\interp{D_2}$.
We can rewrite $D_1$ and $D_2$ as $D_i = d_i\otimes(\half)^{\otimes   n_i}$ for some integers $n_i$ and diagrams $d_i$ of the \zwcalc that do not use the $\half$ symbol.

From the new introduced rule, we get that  $\zwh\vdash d_i=D_i\otimes \left(\two\right)^{\otimes n_i}$. W.l.o.g. assume $n_1\leq n_2$. Then $\interp{d_1\otimes \left(\two\right)^{\otimes n_2-n_1}}=2^{n_2-n_1}\interp{d_1}=2^{n_2}\interp{D_1} = \interp{d_2}$. Since $d_1$ and $d_2$ are ZW-diagrams and have the same interpretation, thanks to the completeness of the \zwcalc, $\zwh\vdash d_1\otimes \left(\two\right)^{\otimes n_2-n_1}=d_2$, which means $\zwh\vdash D_1=D_2$ by applying $n_2$ times the new rule on both sides of the equality.
\qed\end{proof}


\section{From \zxt to \zwh-Diagrams}
\label{section:ZXZW}

In this section we explain how to encode diagrams of the \zxtcalc
into diagrams of the \zwh-Calculus.
The main difficulty is of course that the former represents matrices
with complex coefficients and the latter matrices with dyadic rational
coefficients.
We use for this classical results of algebra that we summarize in the
next subsection.

\subsection{From $\boldsymbol{\mathbb{Q}[\piq]}$ to $\boldsymbol{\mathbb{Q}}$}
All results used in the next two sections are standard in field
theory, see e.g. \cite{roman}.
Let $R \subseteq \mathbb{C}$ be a (commutative) ring and $\alpha \in \mathbb{C}$.
By $R[\alpha]$ we denote the smallest subring of $\mathbb{C}$ that contains both $R$ and $\alpha$.

Of primary importance will be the ring $\mathbb{Q}[\piq]$, as all terms of the $\pi/4$ fragment of the \zxcalc have interpretations as matrices in this ring.
This is clear for all terms except possibly for $\sqrt{2}$, but $\sqrt{2} = \piq - (\piq)^3$.

If $\alpha$ is algebraic, it is well known that $\mathbb{Q}[\alpha]$ is a field. When $F \subseteq F'$ are two fields, $F'$ can be seen as a vector space (actually an algebra) over $F$. Its dimension is denoted $[F':F]$ and we say that $F'$ is an extension of $F$ of degree $[F':F]$.
In the specific case of $\mathbb{Q}[\alpha]$, its dimension over $\mathbb{Q}$ is exactly the degree of the minimal polynomial over $\mathbb{Q}$ of $\alpha$.
Notice that the minimal polynomial of a $n-$th primitive root of the
unity is $\phi(n)$ where $\phi$ is Euler's totient function. 

In our case, $\piq$ is a eighth primitive root of the unity, so that $\mathbb{Q}[\piq]$ is a vector space of dimension $4$, one basis being given by $1 ,\piq,(\piq)^2, (\piq)^3$.
In particular:
\begin{proposition}
  Every element of $\mathbb{Q}[\piq]$ can be written in a unique way 
 $a + b \piq + c (\piq)^2 + d (\piq)^3$ for some rationals numbers $a,b,c,d$.
\end{proposition}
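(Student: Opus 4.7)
The plan is to invoke the standard correspondence between $\mathbb{Q}[\alpha]$ for $\alpha$ algebraic and the quotient of $\mathbb{Q}[X]$ by the minimal polynomial of $\alpha$, specialised to $\alpha = \piq$, which is a primitive $8$-th root of unity.

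First, I would identify the minimal polynomial of $\piq$ over $\mathbb{Q}$. Since $\piq = e^{i\pi/4}$ is a primitive $8$-th root of unity, it is a root of the $8$-th cyclotomic polynomial $\Phi_8(X) = X^4 + 1$, obtained from the factorisation $X^8 - 1 = \prod_{d \mid 8} \Phi_d(X)$. The polynomial $X^4 + 1$ is irreducible over $\mathbb{Q}$ (for instance, by applying Eisenstein's criterion to $\Phi_8(X+1) = X^4 + 4X^3 + 6X^2 + 4X + 2$), and therefore coincides with the minimal polynomial of $\piq$. Consequently, as the paper's preceding paragraph already states, $[\mathbb{Q}[\piq] : \mathbb{Q}] = \deg \Phi_8 = \phi(8) = 4$.

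Next I would deduce the existence of the representation. Since $\piq$ is algebraic, $\mathbb{Q}[\piq]$ coincides with the field $\mathbb{Q}(\piq)$ and every element can be written as a polynomial expression in $\piq$ with rational coefficients. Using $(\piq)^4 = -1$, all higher powers reduce to $\mathbb{Q}$-linear combinations of $1, \piq, (\piq)^2, (\piq)^3$, so an expression of the claimed form exists.

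For uniqueness, a non-trivial equality $a + b\piq + c(\piq)^2 + d(\piq)^3 = 0$ with rational $a,b,c,d$ would exhibit $\piq$ as a root of a non-zero polynomial of degree at most $3$ over $\mathbb{Q}$, contradicting the fact that its minimal polynomial $\Phi_8$ has degree $4$. Equivalently, $\{1, \piq, (\piq)^2, (\piq)^3\}$ is a $\mathbb{Q}$-basis of $\mathbb{Q}[\piq]$, as noted just before the statement. There is no real obstacle here; the only step that requires a moment of care is the identification of $\Phi_8$ with the minimal polynomial, and even that is a textbook computation.
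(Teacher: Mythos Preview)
Your proposal is correct and follows exactly the approach the paper takes: the paper simply records, just before the proposition, that $\mathbb{Q}[\piq]$ has dimension $\phi(8)=4$ over $\mathbb{Q}$ with basis $1,\piq,(\piq)^2,(\piq)^3$, treating the statement as an immediate consequence of standard field theory (citing Roman). Your write-up merely fills in the textbook details (irreducibility of $X^4+1$ via Eisenstein, reduction of powers, linear independence from the degree of the minimal polynomial) that the paper leaves implicit.
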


For $x \in \mathbb{Q}[\piq]$, let $\psi(x)$ be the function defined by  $\psi(x) = y \mapsto x y$. For each $x$, $\psi(x)$ is a linear map and therefore can be given by a $4\times 4$ matrix in the basis $(\piq)^3,(\piq)^2, \piq, 1$.
$\psi(1)$ is of course the identity matrix and
\[
\psi(\piq) = M = \begin{pmatrix}
  0 & 1 & 0 & 0 \\
  0 & 0 & 1 & 0 \\
  0 & 0 & 0 & 1 \\
  -1 & 0 & 0 & 0 \\
  \end{pmatrix}
\]
Notice that $M^t$ is the companion matrix of the polynomial $X^4+1$ which characterises $\piq$ as an algebraic number.

\begin{proposition}
  \label{prop:homomorphism}
  The map:
  \[  \psi: a+b\piq +c(\piq)^2 + d(\piq)^3 \mapsto a I_4 + b M + c M^2 +d M ^3\]
  is a  homomorphism of $\mathbb{Q}$-algebras from  $\mathbb{Q}[\piq]$ to $M_4(\mathbb{Q})$
\end{proposition}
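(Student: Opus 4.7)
The plan is to show $\psi$ is a $\mathbb{Q}$-algebra homomorphism by verifying each required property in turn: well-definedness, $\mathbb{Q}$-linearity, preservation of $1$, and multiplicativity. Well-definedness is immediate from the preceding proposition, which guarantees uniqueness of the decomposition $a + b\piq + c(\piq)^2 + d(\piq)^3$; so the coefficients $(a,b,c,d)$ are unambiguous inputs to $\psi$. $\mathbb{Q}$-linearity is also immediate, since the map is defined coordinatewise in this basis. The identity element maps to $I_4$ by construction.

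The only nontrivial point is multiplicativity, i.e. $\psi(xy) = \psi(x)\psi(y)$ for all $x,y \in \mathbb{Q}[\piq]$. First I would observe that both sides are $\mathbb{Q}$-bilinear in $(x,y)$ (the left by linearity of $\psi$, the right trivially), so it suffices to verify the identity on the basis elements $1,\piq,(\piq)^2,(\piq)^3$. Equivalently, it suffices to check $\psi((\piq)^{i+j}) = M^{i} M^{j}$ for $0 \le i,j \le 3$. For $i+j \le 3$ this is by definition of $\psi$. For $i+j \ge 4$ both sides reduce using the single algebraic relation at stake, namely $(\piq)^4 = -1$ on one side and $M^4 = -I_4$ on the other.

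The heart of the argument is thus to establish $M^4 = -I_4$. I would note, as the excerpt already observes, that $M^t$ is the companion matrix of the polynomial $X^4+1$; by the Cayley--Hamilton theorem (or by a direct computation, multiplying $M$ by itself four times using its sparse cyclic-shift structure), $M$ satisfies $M^4 + I_4 = 0$. Conceptually the cleanest packaging is to invoke the universal property of $\mathbb{Q}[X]/(X^4+1)$: the assignment $X \mapsto M$ extends uniquely to a $\mathbb{Q}$-algebra homomorphism $\mathbb{Q}[X]/(X^4+1) \to M_4(\mathbb{Q})$ precisely because $M^4+I_4 = 0$, and since $X^4+1$ is the minimal polynomial of $\piq$ over $\mathbb{Q}$ (it is the $8$-th cyclotomic polynomial, which is irreducible and has $\piq$ as a root), the evaluation map gives an isomorphism $\mathbb{Q}[X]/(X^4+1) \cong \mathbb{Q}[\piq]$ sending $X$ to $\piq$. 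Composing these yields exactly $\psi$.

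The main obstacle, such as it is, is just confirming $M^4 = -I_4$; no deeper difficulty appears. Once that is done, the homomorphism property is a formal consequence of the universal property of the quotient $\mathbb{Q}[X]/(X^4+1)$, and no further calculation is required.
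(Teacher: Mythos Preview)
Your proof is correct, but it takes a somewhat different route from the paper's (implicit) argument. The paper introduces $\psi$ not via the explicit formula in the proposition statement, but first as the left regular representation: for $x \in \mathbb{Q}[\piq]$, the map $\psi(x)$ is defined as $y \mapsto xy$, viewed as a $\mathbb{Q}$-linear endomorphism of the $4$-dimensional space $\mathbb{Q}[\piq]$ and written as a matrix in the basis $(\piq)^3,(\piq)^2,\piq,1$. With this definition, the homomorphism property is automatic from associativity of multiplication in $\mathbb{Q}[\piq]$: $\psi(xy)(z) = (xy)z = x(yz) = \psi(x)(\psi(y)(z))$, and $\mathbb{Q}$-linearity is distributivity. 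The explicit formula $a I_4 + bM + cM^2 + dM^3$ then follows from linearity and the computation $\psi(\piq)=M$.

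Your approach instead starts from the explicit formula and reduces multiplicativity to the single relation $M^4 = -I_4$, then packages the whole thing via the universal property of $\mathbb{Q}[X]/(X^4+1)$. This is perfectly valid and arguably more transparent about \emph{why} this particular $M$ works (it is the companion matrix of the minimal polynomial). The paper's regular-representation viewpoint is shorter and requires no matrix computation at all, but it leaves the identity $M^4=-I_4$ as a consequence rather than the crux.
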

This homomorphism has a left-inverse. Indeed, let
\[
\theta = \begin{pmatrix}
  1 \\
  \piq \\
  (\piq)^2 \\
  (\piq)^3 \\
\end{pmatrix}
\]
Then $\psi(x)\theta = x\theta$.

With this morphism, we can see elements of $\mathbb{Q}[\piq]$ as matrices over $\mathbb{Q}$.

Of course we can do the same with matrices over $\mathbb{Q}[\piq]$.

\begin{definition}
  Define:
  \[  \psi: A+B\piq +C(\piq)^2 + D(\piq)^3 \mapsto A\otimes  I_4 + B\otimes M + C\otimes M^2 +D\otimes M ^3\]
  $\psi$ is injective and maps a matrix over $\mathbb{Q}[\piq]$ of dimension $n\times m$ to a matrix over $\mathbb{Q}$ of dimension $4n\times 4m$.
\end{definition}  
We use the same notation  $\psi$ as before, as the definitions are equivalent for one-by-one matrices (i.e. scalars).

It is easy to see that Proposition~\ref{prop:homomorphism} holds for the extended $\psi$ in the sense that $\psi(qA) = q\psi(A)$ for $q$ rational, $\psi(A+B) = \psi(A) + \psi(B)$, $\psi(AB) = \psi(A)\psi(B)$ whenever this makes sense.

Notice however that $\psi(A\otimes B)$ is not $\psi(A) \otimes \psi(B)$.

As before, $\psi$ has a left inverse, as evidenced by:
\begin{proposition}
\label{prop:left-inverse}
  For all matrices $X$ of dimension $n \times m$, $ \psi(X)(I_m \otimes \theta) = X \otimes \theta$
\end{proposition}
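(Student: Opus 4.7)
The plan is a direct computation using the definition of the extended $\psi$ together with the mixed-product property of the Kronecker product and the scalar case $\psi(x)\theta = x\theta$ that has already been established.

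First I would decompose $X$ along the basis $1, \piq, (\piq)^2, (\piq)^3$ of $\mathbb{Q}[\piq]$ over $\mathbb{Q}$: write uniquely $X = A + B\,\piq + C\,(\piq)^2 + D\,(\piq)^3$ with $A,B,C,D \in M_{n,m}(\mathbb{Q})$. By definition,
\[
\psi(X) \;=\; A \otimes I_4 + B \otimes M + C \otimes M^2 + D \otimes M^3.
\]
Then I would apply the mixed-product identity $(P \otimes Q)(R \otimes S) = (PR)\otimes(QS)$ term by term. Since $A I_m = A$, $B I_m = B$, etc., each term becomes $A\otimes(I_4\theta)$, $B\otimes(M\theta)$, $C\otimes(M^2\theta)$, $D\otimes(M^3\theta)$.

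Next I would use the scalar case already proved, $\psi(y)\theta = y\theta$ for $y\in\mathbb{Q}[\piq]$. Since $\psi$ is a ring homomorphism on scalars, $\psi((\piq)^k) = M^k$, and hence $M^k\theta = (\piq)^k\theta$ for $k=0,1,2,3$. Substituting and using bilinearity of the Kronecker product (specifically $P\otimes(\lambda v) = (\lambda P)\otimes v$ for a scalar $\lambda$), I obtain
\[
\psi(X)(I_m \otimes \theta) \;=\; A\otimes\theta + (\piq B)\otimes\theta + ((\piq)^2 C)\otimes\theta + ((\piq)^3 D)\otimes\theta.
\]
Factoring $\theta$ out of the right-hand factor via $P_1\otimes\theta + P_2\otimes\theta = (P_1+P_2)\otimes\theta$ gives $\bigl(A + \piq B + (\piq)^2 C + (\piq)^3 D\bigr)\otimes\theta = X\otimes\theta$, which is the claim.

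There is no real obstacle; the only thing to be slightly careful about is the interplay between scalar multiplication in $\mathbb{Q}[\piq]$ and the Kronecker product, namely that the $\piq$-scalars produced by $M^k\theta = (\piq)^k\theta$ on the right factor of each tensor can be absorbed into the left factor. Once this is observed, the proposition follows immediately, and moreover it shows that $\psi$ has a left inverse in the sense claimed: from $\psi(X)$ one recovers $X\otimes\theta$, and then $X$ itself (since the first entry of $\theta$ is $1$, the block structure of $X\otimes\theta$ exposes $X$ directly).
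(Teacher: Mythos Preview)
Your proof is correct and is exactly the direct computation the paper has in mind. The paper does not spell out a proof of this proposition (it merely says ``as before, $\psi$ has a left inverse, as evidenced by'' and states the identity); your expansion via the decomposition $X = A + B\piq + C(\piq)^2 + D(\piq)^3$, the mixed-product rule, and the scalar identity $M^k\theta = (\piq)^k\theta$ is precisely the intended verification.
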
  

While it is true that all coefficients of the standard interpretation of the $\pi/4$ fragment are in $\mathbb{Q}[\piq]$, we can be more precise.

Let $\mathbb{D} = \mathbb{Z}[1/2]$ be the set of all dyadic rational numbers, i.e. rational numbers of the form $p/2^n$.

It is easy to see that any element of $\mathbb{D}[\piq]$ can be written in a unique way 
$a + b \piq + c (\piq)^2 + d (\piq)^3$ for some dyadic rational numbers $a,b,c,d$. (It is NOT a consequence of the similar statement for $\mathbb{Q}$. We have to use here the additional property that $\piq$ is not only an algebraic number, but also an algebraic \emph{integer}).

Then it is clear that actually all coefficients of the $\pi/4$ fragment of the \zxcalc are in $\mathbb{D}[\piq]$.
As $\mathbb{D} \subset \mathbb{Q}$ all we said before still holds, and we actually obtain with $\psi$ a map from matrices over $\mathbb{D}[\piq]$ to matrices over $\mathbb{D}$.

\subsection{Interpretation}


Based on the previous discussion, we define an interpretation
$\interpxw{.}$ from \zxt-diagrams to \zwh-diagrams as follows:\\
\titlerule{$\interpxw{.}$}\vspace{-1.5em}
\begin{multicols}{3}
\[ 
\InputIfFileExists{empty-diagram.tikz}{}{\input{./figures/empty-diagram.tikz}}
~~\mapsto~~ 
}
\quad
}
 \]\vfill
\[ 
}
~~\mapsto~~ 
}
\quad
}
\quad
}
 \]\vfill
\[ 
}
~~\mapsto~~ 
}
~~
}
\quad
}
\]\vfill
\[ 
}
~~\mapsto~~ 
}
~~\raisebox{0.5em}{
}

}
} \]\vfill
\[
}
~~\mapsto~~
\InputIfFileExists{hadamard-interpretation-2-no-braid.tikz}{}{\input{./figures/hadamard-interpretation-2-no-braid.tikz}}
\]\vfill
\[
\begin{tikzpicture}
	\begin{pgfonlayer}{nodelayer}
		\node [style=gn] (0) at (0, -0) {$\frac{\pi}{4}$};
		\node [style=none] (1) at (0, 0.5000001) {};
		\node [style=none] (2) at (0, -0.5000001) {};
	\end{pgfonlayer}
	\begin{pgfonlayer}{edgelayer}
		\draw (1.center) to (2.center);
	\end{pgfonlayer}
\end{tikzpicture}}
~~\mapsto~~
\InputIfFileExists{gn-pi_4-interpretation-2-no-braid.tikz}{}{\input{./figures/gn-pi_4-interpretation-2-no-braid.tikz}}
\]
\end{multicols}

\begin{align*}
\forall D_1&:n\to n',~\forall D_2:m\to m':\\
& D_1\circ D_2 \mapsto \interpxw{D_1}\circ \interpxw{D_2}\qquad(\text{if } m'=n)\\
& D_1\otimes D_2 \mapsto\left( \mathbb{I}^{\otimes n'}\otimes\interpxw{D_2}\right)\circ\left(\nmcrossI{m}{n'}\right)\circ \left(\mathbb{I}^{\otimes m}\otimes\interpxw{D_1}\right)\circ\left(\nmcrossI{n}{m}\right)
\end{align*}

\[
\InputIfFileExists{gn-kpi_4.tikz}{}{\input{./figures/gn-kpi_4.tikz}}
\mapsto \left(
\InputIfFileExists{gn-0-1-m-interpretation.tikz}{}{\input{./figures/gn-0-1-m-interpretation.tikz}}
\right)\circ\left(\interpxw{
}
}\right)^k\circ \left(
\InputIfFileExists{gn-0-n-1-interpretation.tikz}{}{\input{./figures/gn-0-n-1-interpretation.tikz}}
\right)\]
\[
\InputIfFileExists{rn-kpi_4.tikz}{}{\input{./figures/rn-kpi_4.tikz}}
\mapsto \interpxw{\left(
}
\right)^{\otimes m}}\circ\interpxw{
\InputIfFileExists{gn-kpi_4.tikz}{}{\input{./figures/gn-kpi_4.tikz}}
} \circ \interpxw{\left(
}
\right)^{\otimes n}}\]
\rule{\columnwidth}{0.5pt}\\

The interpretation of the spacial composition $\otimes$ might seem a tad cryptical. It is in fact a way of putting ``side-by-side'' the interpretations of $D_1$ and $D_2$, while at the same time making them share the two control wires. We can see it as:
\[ D_1\otimes D_2 \mapsto 
\InputIfFileExists{interp-of-tensor-product-visualised.tikz}{}{\input{./figures/interp-of-tensor-product-visualised.tikz}}
 \]
It order for this to make sense, $D_1$ and $D_2$ should be able to commute on the control wires. This property is provided by the completeness of the \zwcalc, since it is semantically true.

One can check that $\interp{\interpxw{
}
}} = \psi\left( \interp{
}
} \right) = \frac{1}{2}\begin{pmatrix}1&1\\1&-1\end{pmatrix}\otimes (M-M^3)$ and $\interp{\interpxw{
}
}} = \psi\left( \interp{
}
} \right) = \begin{pmatrix}I_4 & 0 \\ 0 & M\end{pmatrix}$. More generally:

    \begin{proposition}
      \label{prop:pxwpsi}
      Let $D$ be a diagram of the \zxtcalc. Then
      \[\interp{\interpxw{D}}=\psi(\interp{D})\]

      In particular, if $\interp{D_1} = \interp{D_2}$ then
      $\interp{\interpxw{D_1}} = \interp{\interpxw{D_2}}$      
    \end{proposition}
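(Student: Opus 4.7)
I would proceed by structural induction on the ZX-diagram $D$ to establish the first identity; the ``in particular'' statement then follows immediately by applying this to both $D_1$ and $D_2$ and using the fact that $\psi$ is a well-defined function on matrices. The base cases are the generators, and the inductive cases are the two compositions.

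For the topological generators $e,\mathbb{I},\sigma,\epsilon,\eta$, the standard ZX-interpretation has integer entries, so $\psi(\interp{G}) = \interp{G}\otimes I_4$; the definition of $\interpxw{\cdot}$ simply adds two identity wires to each of these, whose ZW-interpretation is precisely $I_4$ tensored on the right, and the required identity is immediate. For the Hadamard, the key observation is $\tfrac{1}{\sqrt{2}} = \tfrac{1}{2}(\piq-(\piq)^3)$ in $\mathbb{D}[\piq]$, from which $\psi(\tfrac{1}{\sqrt{2}}) = \tfrac{1}{2}(M-M^3)$; the equation $\interp{\interpxw{H}} = \psi(\interp{H})$ is then the direct computation explicitly stated in the paragraph preceding the proposition. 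The $\tfrac{\pi}{4}$ one-input one-output green node is analogous, yielding $\mathrm{diag}(I_4,M)$. General green and red spiders with phases $k\pi/4$ are defined by the recursive clauses of $\interpxw{\cdot}$ as compositions of Hadamards, $\tfrac{\pi}{4}$ green nodes, and topological pieces, so their cases reduce to the sequential composition step combined with the already-handled base cases.

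The sequential composition case is immediate from the inductive hypothesis and the multiplicativity of $\psi$ given by Proposition~\ref{prop:homomorphism}:
\[
\interp{\interpxw{D_2\circ D_1}} = \interp{\interpxw{D_2}}\circ\interp{\interpxw{D_1}} = \psi(\interp{D_2})\circ\psi(\interp{D_1}) = \psi(\interp{D_2\circ D_1}).
\]

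The main obstacle is the tensor composition case, since $\psi$ is \emph{not} compatible with $\otimes$: naively $\psi(A)\otimes\psi(B)$ lives in a space with two copies of $\mathbb{D}^4$, whereas $\psi(A\otimes B)$ carries only one. The clever definition of $\interpxw{D_1\otimes D_2}$ handles this by routing the two ``control'' wires---realising a \emph{shared} $\mathbb{D}^4$ register---successively through $\interpxw{D_1}$ and then $\interpxw{D_2}$, with crossings for bookkeeping. To verify correctness, I would evaluate both sides on a pure tensor $v_1\otimes v_2\otimes u\in\mathbb{D}^{2^n}\otimes\mathbb{D}^{2^m}\otimes\mathbb{D}^4$. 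Writing $\interp{D_1}=\sum_j A_j(\piq)^j$ and $\interp{D_2}=\sum_l B_l(\piq)^l$, on the one hand
\[
\psi(\interp{D_1}\otimes\interp{D_2})(v_1\otimes v_2\otimes u) = \sum_{j,l}(A_jv_1)\otimes(B_lv_2)\otimes(M^{j+l}u),
\]
and on the other the construction first swaps, applies $\psi(\interp{D_1})$ to produce $\sum_j(A_jv_1)\otimes(M^ju)$ on the relevant factors, swaps again, then applies $\psi(\interp{D_2})$, giving exactly the same double sum. The decisive identity is $M^lM^j = M^{j+l}$, which is precisely what allows the two control wires to be shared rather than duplicated; unpacking the diagrammatic crossings to cleanly match the tensor structure is the main bookkeeping task, but no new idea is needed beyond this.
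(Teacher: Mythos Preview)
Your proof is correct and follows essentially the same inductive route as the paper: both single out the tensor case as the only nontrivial step and resolve it via the decomposition $\interp{D_i}=\sum_k A_k(\piq)^k$ together with the key identity $M^jM^l=M^{j+l}$, your evaluation on pure tensors being merely a presentational variant of the paper's matrix computation. One small omission: the recursive clause for a general green spider also uses the $0$-angle $n\to 1$ and $1\to m$ green pieces as primitives of $\interpxw{\cdot}$, and these are not among the topological generators you listed---they are additional (easy) base cases that should be checked separately.
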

    The proof is a straightforward induction using the fact
    that $\psi$ is an homomorphism. Slight care has to be taken to
    treat the case of $D_1 \otimes D_2$:\\
Suppose $\interp{D_1}=\sum\limits_{k=0}^3A_k e^{i\frac{k\pi}{4}}$ and $\interp{D_2}=\sum\limits_{k=0}^3B_k e^{i\frac{k\pi}{4}}$ are their \emph{unique} decomposition, and that $\interp{\interp{D_i}_{XW}}=\psi(\interp{D_i})$. Then:
\begin{align*}
\interp{\interp{D_1\otimes D_2}_{XW}} &= \left(I\otimes \psi(\interp{D_1})\right)\circ \interp{\scalebox{0.7}{\nmcrossI{}{}~}}\circ \left(I\otimes \sum\limits_{k=0}^3A_k\otimes M^k\right)\circ \interp{\scalebox{0.7}{\nmcrossI{}{}~}}\\
&= \left(\sum\limits_{l=0}^3I\otimes B_l \otimes M^l\right)\circ \left(\sum\limits_{k=0}^3A_k \otimes I\otimes M^k\right)
= \sum_{k,l}A_k\otimes B_l \otimes M^{k+l}\\
&= \psi\left(\sum_{k,l}(A_k\otimes B_l) e^{i\frac{(k+l)\pi}{4}}\right) = \psi(\interp{D_1\otimes D_2})
\end{align*}

\section{From \zwh to \zxt-Diagrams}
\label{section:ZWZX}

We define here an interpretation $\interpwx{.}$ that transforms any diagram of the \zwh-Calculus into a \zxt-diagram, which is easy to do since $\mathbb{D}\subset\mathbb{D}[\piq]$:\\
\titlerule{$\interpwx{.}$}
\vspace{-2em}
\begin{multicols}{3}
\[ 
\InputIfFileExists{empty-diagram.tikz}{}{\input{./figures/empty-diagram.tikz}}
 \quad\mapsto\quad 
\InputIfFileExists{empty-diagram.tikz}{}{\input{./figures/empty-diagram.tikz}}
\]\vfill
\[ 
}
 \quad\mapsto\quad 
}
\]\vfill
\[ 
}
 \quad \raisebox{0.3em}{$\mapsto$} \quad 
}
\]\vfill
\[ 
}
 \quad\raisebox{0.3em}{$\mapsto$}\quad 
}
\]\vfill
\[
\InputIfFileExists{ZW-to-ZX-braid-no-braid.tikz}{}{\input{./figures/ZW-to-ZX-braid-no-braid.tikz}}
\]\vfill
\[
\InputIfFileExists{ZW-to-ZX-cross-no-braid.tikz}{}{\input{./figures/ZW-to-ZX-cross-no-braid.tikz}}
\]\vfill
\[ \half \quad\mapsto\quad 
\InputIfFileExists{half-ZX.tikz}{}{\input{./figures/half-ZX.tikz}}
 \]\vfill
\[
\InputIfFileExists{ZW-to-ZX-white-dot-1-1.tikz}{}{\input{./figures/ZW-to-ZX-white-dot-1-1.tikz}}
\]\vfill
\[
\InputIfFileExists{ZW-to-ZX-white-dot-2-1.tikz}{}{\input{./figures/ZW-to-ZX-white-dot-2-1.tikz}}
\]\vfill
\[
\InputIfFileExists{ZW-to-ZX-dot-1-1.tikz}{}{\input{./figures/ZW-to-ZX-dot-1-1.tikz}}
\]
\[
\InputIfFileExists{ZW-to-ZX-dot-1-2.tikz}{}{\input{./figures/ZW-to-ZX-dot-1-2.tikz}}
\]
\end{multicols}

\noindent\begin{minipage}{\columnwidth}
\[D_1\circ D_2\mapsto \interpwx{D_1}\circ\interpwx{D_2}\qquad D_1\otimes D_2\mapsto \interpwx{D_1}\otimes\interpwx{D_2}\]
\rule{\columnwidth}{0.5pt}
\end{minipage}\\

\begin{proposition}
  \label{prop:interpwx}
  Let $D$ be a diagram of the \zwh calculus.
  Then $\interp{\interpwx{D}}=\interp{D}$
\end{proposition}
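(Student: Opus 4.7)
The plan is to proceed by structural induction on $D$. Since $\interpwx{\cdot}$ is defined to commute with both sequential composition and parallel composition, and since the standard interpretation in both calculi is likewise compositional, the inductive steps
\begin{align*}
\interp{\interpwx{D_1 \circ D_2}} &= \interp{\interpwx{D_1}} \circ \interp{\interpwx{D_2}} = \interp{D_1} \circ \interp{D_2} = \interp{D_1 \circ D_2},\\
\interp{\interpwx{D_1 \otimes D_2}} &= \interp{\interpwx{D_1}} \otimes \interp{\interpwx{D_2}} = \interp{D_1} \otimes \interp{D_2} = \interp{D_1 \otimes D_2}
\end{align*}
are immediate from the induction hypothesis and functoriality of $\interp{\cdot}$. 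The entire content of the proof therefore lies in the base case: for every generator $g$ of the \zwh-calculus, one must verify that $\interp{\interpwx{g}} = \interp{g}$.

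For the purely topological generators (empty diagram, identity wire, cup, cap, and the swap), $\interpwx{\cdot}$ is defined to send each to the structurally identical \zxt-diagram, and these shapes are given exactly the same matrix interpretation in both calculi, so the equality is by construction. The remaining cases requiring actual computation are: (i) the braid generator, whose matrix is $\mathrm{diag}(1,1,1,-1)$ up to the swap pattern and which is encoded in ZX using a $Z$-spider carrying a $\pi$ phase; (ii) the scalar $\half$, which must evaluate to $1/2$; (iii) the white dots, whose images encode the asymmetric $W$-state matrices; and (iv) the black dots, encoding the $Z$/GHZ matrices. In each of these cases the prescribed ZX-fragment can be evaluated by a direct matrix computation: apply the ZX definition to each spider and Hadamard, compose the resulting small matrices in the order dictated by the diagram, and compare the result entry-by-entry with the ZW matrix recalled in Section~\ref{section:zw}.

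The main obstacle is the verification for the $W$-generators. Their matrices are not symmetric under colour swap or transposition, contain a mix of zero and nonzero entries, and the ZX encoding chosen by $\interpwx{\cdot}$ typically uses several red and green spiders joined by cups, caps, and Hadamards, so there is ample room for sign errors or mis-placed tensor factors when checking the match. In practice this case splits further between the arity-$(1,1)$ and arity-$(1,2)$ variants, and the latter is the most delicate because one has to track correctly which output wire carries which column of the $4 \times 2$ matrix. The black dot and braid cases are analogous but noticeably shorter, and the scalar $\half$ reduces to a single scalar identity in the \zxcalc that can be read off from the Hadamard and spider interpretations.

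Once all generators are verified, the induction concludes and the proposition follows. Note that no axiom of either calculus is invoked in this proof: the statement is purely semantic, and the side-by-side compositionality of $\interp{\cdot}$ and $\interpwx{\cdot}$ reduces everything to a finite list of matrix computations. The preservation of provability (the second half of Part~\ref{prop:ZWZX} in the bird's-eye overview) will be a separate and substantially harder statement handled in Proposition~\ref{prop:rules-preserved}.
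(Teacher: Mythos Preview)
Your approach is correct and matches the paper's own proof, which consists of the single sentence ``The proof is by induction on $D$.'' You have simply spelled out what that induction entails. One small slip: you have the roles of the white and black dots reversed---in the \zwcalc the white dots carry the $Z$/GHZ-type matrices and the black dots carry the $W$-type matrices (so it is the three-legged \emph{black} dot whose ZX image is the most delicate, as the paper itself remarks when introducing the triangle node)---but this does not affect the argument, since either way the base case is a finite list of direct matrix checks.
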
  
The proof is by induction on $D$.

This interpretation $\interp{.}_{WX}$ from the \zwcalc to the \zxcalc is pretty straightforward, except for the three-legged black node. This is where some syntactic sugar can come in handy.
\begin{definition}
\label{def:triangle}
We define the ``triangle node'' as:
\[
\InputIfFileExists{ug-decomp.tikz}{}{\input{./figures/ug-decomp.tikz}}
\]
\end{definition}
One can check that $\interp{
\InputIfFileExists{ug-node.tikz}{}{\input{./figures/ug-node.tikz}}
} = \begin{pmatrix}1&1\\0&1\end{pmatrix}$. Then the interpretation of the three-legged black dot is simplified:
\[
\InputIfFileExists{ZW-to-ZX-dot-1-2-simplified.tikz}{}{\input{./figures/ZW-to-ZX-dot-1-2-simplified.tikz}}
\]
as is the rule \nfi (see Lemma \ref{lem:not-ug-is-symmetrical}):
\[
\InputIfFileExists{not-ug-is-symmetrical.tikz}{}{\input{./figures/not-ug-is-symmetrical.tikz}}
\quad \textnormal{(BW')}\]

This shortcut will be very useful in the technical proof of the completeness of the language for Clifford+T with the set of rules in Figure \ref{fig:ZX_rules}.

\begin{proposition}
\label{prop:rules-preserved}
The interpretation $\interpwx{.}$ preserves all the rules of the \zwh-Calculus:
\[ \zwh\vdash D_1=D_2 \quad\implies\quad \zxt\vdash\interpwx{D_1}=\interpwx{D_2} \]
\end{proposition}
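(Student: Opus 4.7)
The plan is to verify the claim rule by rule for the set of ZW-rules displayed in Figure \ref{fig:ZW_rules}, together with the extra $\half$-rule of the \zwh-calculus. Since $\interpwx{.}$ is defined to commute with both sequential and spacial composition, it suffices to check each axiom once; equality under subdiagram replacement then follows automatically. The axioms that only involve the topological generators (identity, cups, caps, crossings) translate to equalities of \zx-diagrams that reduce to the topological equations of \zxt, so the work is concentrated on the rules involving the white dot, the black dot, the W-nodes, and the ZW-crossing, plus the $\half$-rule which unfolds via $\interpwx{\half}$ to a small scalar gadget whose doubling to the identity is handled by \e and \so.

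Before attacking the substantive rules I would set up a small toolbox of derived equations about the triangle node of Definition \ref{def:triangle}. This includes: the fact that \nfi can be rewritten in the symmetric form (BW$'$) as stated in Lemma~\nt; a handful of commutation lemmas between the triangle and green/red spiders, between the triangle and the Hadamard box, and copy/fusion rules for the triangle against the state $\tikzfig{gn-empty-is-gn-zero}$-like plugs. Each of these lemmas is proved from the existing axioms \so, \st, \sth, \e, \bo, \bt, \kt, \supp, \eu, \h, together with the two new rules \com and \nfi. Once this toolbox is in place, the translation of the white and black nodes can be rewritten and manipulated combinatorially rather than by expanding each Hadamard box.

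With that machinery, each ZW-axiom becomes a concrete equation between \zxt-diagrams built from spiders, Hadamards and triangles. The approach for each one is the same: translate both sides via $\interpwx{.}$, normalise using the triangle lemmas, and then close the gap by a short computation. The $\half$-rule in particular translates to a scalar identity easily checked by \e and \so, which is why it was introduced as a single node.

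The main obstacle, and the reason the new axioms \com and \nfi were added in the first place, lies in the ZW-rules that fuse W-nodes with the ZW-crossing and with the black dot — notably the rule of type ``Reidmeister-3'' and the W-copy/antipode style rules. Their translations produce tangled arrangements of triangles and Hadamards whose equivalence is not derivable from the pre-existing \zx-axiomatisation; derivability here is exactly what \com (which handles commutation of controls and is parametrised by three free angles) and \nfi (specific to the \frag4) were calibrated to provide, after the simplifications mentioned at the end of Section~\ref{sec:bird}. I expect the lengthiest case to be the Reidmeister-3 analogue, since it mixes three crossings with three W-nodes and therefore requires iterated use of the triangle toolbox together with both new rules.
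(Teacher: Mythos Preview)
Your overall plan---verify each \zwh axiom under $\interpwx{\cdot}$, after first building a toolbox of triangle lemmas derived from \com and \nfi---is exactly the route the paper takes, and the observation that compositionality of $\interpwx{\cdot}$ reduces the task to the generators of the rule set is correct.

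Where your proposal goes astray is in locating the difficulty. The Reidemeister-3 rule of the \zwcalc involves only the ZW-crossing, and under $\interpwx{\cdot}$ it collapses to a two-step application of \so; it is one of the \emph{easiest} cases, not the hardest. The genuinely laborious rule is $5a$ (the associativity/bialgebra-type law for the black W-node), whose image under $\interpwx{\cdot}$ is a tangle of triangles and spiders that the paper handles via seven intermediate derived equalities, each relying on the triangle toolbox and ultimately on both \com (through Lemma~\nf and its consequence Lemma~\ref{lem:n-four}) and \nfi (through Lemma~\nt and Lemma~\ref{lem:n-five}). Your description ``three crossings with three W-nodes'' does not match any rule in Figure~\ref{fig:ZW_rules}, so the case analysis you sketch would not find the real bottleneck. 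A smaller inaccuracy: the $\half$-rule does not use \e; its image is disposed of with \sth, \so, and the scalar lemmas (Lemmas~\ref{lem:2-is-sqrt2-squared} and~\ref{lem:inverse}).
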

The proof is in appendix at Section \ref{prf:rules-preserved}.

\section{Completeness of the \frag{4} of the \zxcalc}
\label{section:complete}

To finish the proof it remains to compose the two interpretations:

\begin{proposition}
\label{prop:left-inverse-ZX}
We can retrieve any ZX$_{\pi/4}$-diagram from its image under the composition of the two interpretations:
\begin{align*}
\forall D\in \zxt,\quad
\zxt\vdash D =\left(\hspace{-0.2em}\scalebox{0.8}{
\InputIfFileExists{bottom-composition.tikz}{}{\input{./figures/bottom-composition.tikz}}
}\hspace{-0.3em}\right)\circ\interpwx{\interpxw{D}}\circ\left(\hspace{-0.2em}\scalebox{0.8}{
\InputIfFileExists{top-composition.tikz}{}{\input{./figures/top-composition.tikz}}
}\hspace{-0.3em}\right)
\end{align*}
\end{proposition}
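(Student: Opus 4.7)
The plan is to prove this by structural induction on $D$, guided by the semantic content of the identity. Semantically, combining Propositions~\ref{prop:interpwx} and~\ref{prop:pxwpsi} gives $\interp{\interpwx{\interpxw{D}}}=\psi(\interp{D})$, a $4\cdot 2^l\times 4\cdot 2^k$ matrix encoding $\interp{D}$ in the basis $1,\piq,(\piq)^2,(\piq)^3$ of the module $\mathbb{D}[\piq]$ over $\mathbb{D}$. By Proposition~\ref{prop:left-inverse}, attaching a ZX-realisation of $\theta$ on the input side produces $\interp{D}\otimes\theta$; a dual effect on the output side that picks out the coefficient of $1$ then recovers $\interp{D}$. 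So the intended reading is that the top-composition is a ZX-state whose interpretation equals $\theta$, and the bottom-composition is a ZX-effect whose interpretation extracts the $1$-component. This identifies exactly what needs to be proved syntactically.

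For the induction itself, the base cases are one per generator of \zxt ($e$, $\mathbb{I}$, $\sigma$, $\epsilon$, $\eta$, $H$, and the spiders $R_Z^{(n,m)}(k\pi/4)$ and $R_X^{(n,m)}(k\pi/4)$). For each, I would compute $\interpxw{G}$, then $\interpwx{\interpxw{G}}$, and reduce the composition with the top- and bottom-ZX-diagrams to $G$ using the rules of Figure~\ref{fig:ZX_rules} together with the triangle-node sugar of Definition~\ref{def:triangle} and its derived properties (including the symmetry \textnormal{(BW')}). These are direct but lengthy; the spider cases reduce to an identity saying that the ``control wires'' projected through $\theta$ act as a $k\pi/4$-phase on a single green node, which is essentially the definition of $\interpxw{\cdot}$ on $Z$-spiders.

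The sequential-composition case is almost immediate: since $\interpxw{D_1\circ D_2}=\interpxw{D_1}\circ\interpxw{D_2}$ and $\interpwx{\cdot}$ preserves $\circ$, the control wires pass straight through the middle, so applying the induction hypothesis to the outer and inner layers and composing yields the result; no new rules are needed.

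The spacial-composition case is the main obstacle. The encoding $\interpxw{D_1\otimes D_2}$ uses crossings (\nmcrossI{}{}) so that $D_1$ and $D_2$ \emph{share} a single pair of control wires rather than each having their own. To invoke the induction hypothesis on $D_1$ and $D_2$ separately, I need to ``split'' that pair into two independent pairs (each carrying its own $\theta$-state at the top and own extraction at the bottom), then reassemble. Semantically this is the statement $\theta\otimes\theta=(\text{comult})\circ\theta$ together with the dual merge, and syntactically it amounts to provable copy/commutation lemmas for the top-composition state along a green spider. I expect these lemmas to follow from the spider rules \so--\sth, rule \e, and crucially the new axiom \com, which was added precisely to let the controls commute past arbitrary $k\pi/4$-phase nodes. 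My plan is therefore to isolate two auxiliary lemmas (one asserting that the top-composition duplicates through a comultiplication, and one that the bottom-composition merges through a multiplication), prove them once and for all using \com together with \eu and \supp, and then carry out the $D_1\otimes D_2$ step as a routine diagrammatic rearrangement. This is the step most likely to be technical, and it is where a careful use of the new axioms of Section~\ref{sec:ZX} will be essential.
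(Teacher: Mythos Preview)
Your inductive structure and the handling of the generators and of sequential composition match the paper. The gap is in the spacial-composition step. The auxiliary lemma you propose, ``$\theta\otimes\theta=(\text{comult})\circ\theta$ along a green spider'', is semantically false: writing $\theta=\sum_{k=0}^{3}e^{ik\pi/4}\ket{k}$ as a two-qubit state, the green comultiplication sends it to $\sum_k e^{ik\pi/4}\ket{k}\ket{k}$, whereas $\theta\otimes\theta=\sum_{j,k}e^{i(j+k)\pi/4}\ket{j}\ket{k}$. Since no sound rule set can derive an unsound equation, this route is blocked; the same obstruction applies to any Frobenius comultiplication available in \zxt, because $\theta$ is not a classical point of either the green or the red structure.

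The paper avoids duplicating $\theta$ altogether by strengthening the induction hypothesis to the one-sided statement
\[
\zxt\vdash\ \interpwx{\interpxw{D}}\circ\bigl(\mathbb{I}\otimes\theta\bigr)\ =\ D\otimes\theta,
\]
the diagrammatic form of Proposition~\ref{prop:left-inverse}. With this hypothesis the tensor case is routine: plug $\theta$ into the shared control wires, apply the IH to $\interpwx{\interpxw{D_1}}$ to obtain $D_1\otimes\theta$, let the emerging $\theta$ slide through the crossing, then apply the IH to $\interpwx{\interpxw{D_2}}$ to obtain $D_1\otimes D_2\otimes\theta$. No copy or merge lemma is needed, and axiom \com plays no role in this step; in the paper \com enters only in the base case for $R_Z(\pi/4)$ (via Lemma~\ref{lem:C1-original} and its consequences). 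The two-sided statement is then recovered once, at the very end, by composing with the bottom effect and reducing $e_1\circ\theta$ to the empty diagram.
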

The proof is in appendix at Section \ref{prf:left-inverse-ZX}.

\begin{corollary}
\label{cor:composition}
  If $\zxt \vdash \interpwx{\interpxw{D_1}} = \interpwx{\interpxw{D_2}}$
  then $\zxt \vdash D_1 = D_2$.
\end{corollary}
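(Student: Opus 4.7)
The plan is to derive this corollary almost immediately from Proposition \ref{prop:left-inverse-ZX}, which does all the real work. That proposition produces, for any $\zxt$-diagram $D$, a provable identity of the form $\zxt \vdash D = B \circ \interpwx{\interpxw{D}} \circ T$, where $B$ and $T$ denote the fixed ``bottom composition'' and ``top composition'' $\zxt$-diagrams appearing in its statement. In other words, the two stray inputs/outputs introduced by $\interpxw{\cdot}$ can be provably cleaned up on the ZX side.

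First I would instantiate Proposition \ref{prop:left-inverse-ZX} at both $D_1$ and $D_2$, obtaining
\[\zxt \vdash D_1 = B \circ \interpwx{\interpxw{D_1}} \circ T \qquad \text{and} \qquad \zxt \vdash D_2 = B \circ \interpwx{\interpxw{D_2}} \circ T.\]
Next, from the hypothesis $\zxt \vdash \interpwx{\interpxw{D_1}} = \interpwx{\interpxw{D_2}}$, I would invoke the congruence property of provability in the ZX-calculus stated in Section \ref{sec:ZX} (namely, that $\zxt$-provable equalities are stable under pre- and post-composition with a fixed diagram). Applying this congruence twice — once pre-composing with $T$ and once post-composing with $B$ — yields
\[\zxt \vdash B \circ \interpwx{\interpxw{D_1}} \circ T = B \circ \interpwx{\interpxw{D_2}} \circ T.\]

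Finally, chaining the three provable equalities by transitivity of $=$ in $\zxt$ gives $\zxt \vdash D_1 = D_2$, which is what we wanted.

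There is no real obstacle in the corollary itself: the entire difficulty has been pushed into Proposition \ref{prop:left-inverse-ZX}, whose (deferred) proof must construct an explicit $\zxt$-derivation that collapses the two extra wires introduced by the encoding $\interpxw{\cdot}$ back to the identity — this is precisely where the two new axioms \com{} and \nfi{} of Figure \ref{fig:ZX_rules} are expected to play their essential role. Once that proposition is in hand, the corollary is just a three-line application of congruence and transitivity.
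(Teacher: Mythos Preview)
Your proposal is correct and matches the paper's approach: the paper states the result as a corollary of Proposition~\ref{prop:left-inverse-ZX} with no separate proof, leaving implicit exactly the congruence-and-transitivity argument you spell out. Your remark that the real work lives in Proposition~\ref{prop:left-inverse-ZX} is accurate, though note that the new axioms \com{} and \nfi{} are also (and arguably more heavily) exercised in Proposition~\ref{prop:rules-preserved}, not only in the collapse of the two extra wires.
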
  

%

\section{Expressive power of the \zxt-diagrams}
\label{section:exp}

The \zwcalc is complete, and additionally any integer matrix can be represented in the \zwcalc \cite{zw}.
A similar result follows immediately for the \zwh-calculus.

\begin{proposition}
\zwh-Diagrams are universal for matrices of $\mathbb{D}^{2^n}\times\mathbb{D}^{2^m}$:
\[\forall A\in \mathbb{D}^{2^n}\times\mathbb{D}^{2^m},~~\exists D\in \zwh,~~ \interp{D}=A\]
\end{proposition}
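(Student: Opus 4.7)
The plan is to bootstrap off the universality of the plain \zwcalc for integer matrices (stated as the lemma right after the standard interpretation of ZW-diagrams), using the new scalar generator $\half$ to divide out any necessary powers of two.

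First, I would observe that every dyadic matrix $A\in\mathbb{D}^{2^n}\times\mathbb{D}^{2^m}$ has only finitely many entries, each of which is of the form $p/2^q$ for some $p\in\mathbb{Z}$, $q\in\mathbb{N}$. Taking the maximum $q$ across all entries, there exists an integer $N\geq 0$ such that $2^N A$ has all its entries in $\mathbb{Z}$; that is, $B:=2^N A\in\mathbb{Z}^{2^n}\times\mathbb{Z}^{2^m}$.

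By the quoted universality of \zw-diagrams for integer matrices, there exists a \zwcalc diagram $D_B:n\to m$ (in particular a \zwhcalc diagram, since $T_e\subset T_{1/2}$) with $\interp{D_B}=B$. I would then set
\[D \;:=\; D_B \otimes (\half)^{\otimes N},\]
which is a diagram of the \zwhcalc of type $n\to m$ (the tensor factors of type $0\to 0$ do not change the domain or codomain). Using the multiplicativity of $\interp{\cdot}$ under $\otimes$ together with $\interp{\half}=\frac{1}{2}$, we get
\[\interp{D} \;=\; \interp{D_B}\otimes \left(\tfrac{1}{2}\right)^{\otimes N} \;=\; \tfrac{1}{2^N}\, B \;=\; A,\]
as required.

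I do not anticipate any real obstacle: the only subtle point is that the additional generator and axiom of the \zwhcalc have been designed precisely so that multiplication by $\tfrac{1}{2}$ can be realised diagrammatically, and the construction above reduces the problem to the already-established integer case in one step. The argument is essentially the same shape as the existence half of Proposition~\ref{prop:zwcomplete}'s proof, where diagrams are decomposed as $d_i\otimes(\half)^{\otimes n_i}$.
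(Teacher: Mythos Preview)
Your proof is correct and is precisely the argument the paper has in mind: the statement is presented without proof, simply as ``a similar result follows immediately'' from the universality of \zw-diagrams for integer matrices, and the clearing-denominators construction $D=D_B\otimes(\half)^{\otimes N}$ is the intended one (mirroring, as you note, the decomposition used in the proof of Proposition~\ref{prop:zwcomplete}).
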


Regarding the expressive power of \zxt-diagrams, since the unitary matrices over $\mathbb{D}[\piq]$ are representable with Clifford+T circuits \cite{SelingerGiles}, so are they with \zxt-diagrams. We actually show that any matrix over $\mathbb{D}[\piq]$ can be represented by a \zxt-diagram:

\begin{proposition}
\label{prop:universality}
The \frag{4} of the \zxcalc represents exactly  matrices over $\mathbb{D}[e^{i\frac{\pi}{4}}]$:
\[\forall A\in \mathbb{D}[e^{i\frac{\pi}{4}}]^{2^n\times 2^m},~~\exists D\in \zxt,~~ \interp{D}=A\]
\end{proposition}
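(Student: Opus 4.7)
The plan is to establish both inclusions of the ``exactly'' claim separately. The forward containment $\{\interp{D}:D\in\zxt\}\subseteq\mathbb{D}[\piq]$-matrices is immediate by induction on the diagram structure: every generator has coefficients in $\mathbb{D}[\piq]$. In particular, for the Hadamard one uses $\sqrt 2=\piq-(\piq)^3$, so that $1/\sqrt 2\in\mathbb{D}[\piq]$; and $\mathbb{D}[\piq]$ is closed under sums, products, and tensor products.

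For the converse I would reduce to the universality of $\zwh$ for $\mathbb D$-matrices just established, via the homomorphism $\psi$. Given $A\in\mathbb{D}[\piq]^{2^n\times 2^m}$, the matrix $\psi(A)$ lies in $\mathbb D^{2^{n+2}\times 2^{m+2}}$. By the preceding proposition, pick a $\zwh$-diagram $D':m+2\to n+2$ with $\interp{D'}=\psi(A)$. Pushing $D'$ through the interpretation $\interpwx{\cdot}$ yields a $\zxt$-diagram whose semantics is again $\psi(A)$, by Proposition~\ref{prop:interpwx}. I would then set
\[
D \;:=\; (\text{bottom}_n)\circ\interpwx{D'}\circ(\text{top}_m),
\]
using the very same top and bottom $\zxt$-gadgets that appear in Proposition~\ref{prop:left-inverse-ZX}; this gives a $\zxt$-diagram of type $m\to n$, i.e.\ interpreting to a $2^n\times 2^m$ matrix, as required.

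It remains to check $\interp{D}=A$. By soundness this reduces to the semantic identity
\[
A \;=\; \interp{\text{bottom}_n}\circ\psi(A)\circ\interp{\text{top}_m}
\]
for every matrix $A$ over $\mathbb{D}[\piq]$. Proposition~\ref{prop:left-inverse-ZX} gives this equality only when $A$ is itself of the form $\interp{D_0}$ for some $\zxt$-diagram $D_0$, so the main obstacle I would expect is to bootstrap it to an arbitrary matrix. Since both sides are $\mathbb D$-linear in $A$ and depend only on the fixed maps $\interp{\text{top}_m}$, $\interp{\text{bottom}_n}$ and on $\psi$, I would verify the identity directly at the semantic level using Proposition~\ref{prop:left-inverse}: that proposition provides $\psi(A)(I_{2^m}\otimes\theta)=A\otimes\theta$, so it is enough to show that $\interp{\text{top}_m}$ acts as $I_{2^m}\otimes\theta$ (up to an invertible scalar) and that $\interp{\text{bottom}_n}$ acts as a matching left inverse $I_{2^n}\otimes\theta^{*}$ with $\theta^{*}\theta=1$. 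This is a short finite calculation on the explicit $\zxt$-diagrams chosen for top and bottom in Proposition~\ref{prop:left-inverse-ZX}, after which the proposition follows by composing the two universality results.
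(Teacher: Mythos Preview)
Your proposal is correct and follows essentially the same route as the paper: apply $\psi$, invoke universality of $\zwh$ over $\mathbb D$, transport back via $\interpwx{\cdot}$, and then undo $\psi$ semantically using Proposition~\ref{prop:left-inverse}. The paper is marginally more direct in the last step: rather than reusing the top/bottom gadgets from Proposition~\ref{prop:left-inverse-ZX} and then checking their semantics, it simply exhibits concrete $\zxt$-diagrams whose interpretations are exactly $\theta$ and the row vector $e_1=(1\;0\;0\;0)$, and computes $(I\otimes e_1)\circ\psi(A)\circ(I\otimes\theta)=(I\otimes e_1)\circ(A\otimes\theta)=A\otimes(e_1\theta)=A$ in one line. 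In particular no ``up to an invertible scalar'' bookkeeping is needed, and the syntactic content of Proposition~\ref{prop:left-inverse-ZX} is never invoked---only the semantic identity of Proposition~\ref{prop:left-inverse} matters here.
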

\begin{proof}
Let $A\in \mathbb{D}[e^{i\frac{\pi}{4}}]^{2^n\times 2^m}$. We define $A'=\psi(A)\in\mathbb{D}^{2^{n+2}\times 2^{m+2}}$. Since \zwh-diagrams are universal for matrices over dyadic rationals: $\exists D\in \zwh,~\interp{D}=A'$. Since $\interpwx{.}$ preserves the semantics, we can define a ZX-diagram of the \frag{4} $D'=\interpwx{D}$ such that $\interp{D'}=A'$.\\
Now, notice that $\theta = \begin{pmatrix}1\\\piq\\(\piq)^2\\(\piq)^3\end{pmatrix} =\interp{\scalebox{0.8}{
\InputIfFileExists{theta.tikz}{}{\input{./figures/theta.tikz}}
}}$
,\\and $e_1=\begin{pmatrix}1&0&0&0\end{pmatrix} = \interp{\scalebox{0.8}{
\InputIfFileExists{projector-1-0-0-0.tikz}{}{\input{./figures/projector-1-0-0-0.tikz}}
}}$, so if we apply the second diagram at the two bottom right wires, and the first state on the two top right wires of $D'$, we end up with $D''$ such that $\interp{D''}=A$. Indeed:\\
$\interp{D''} = (I\otimes
e_1)\circ\interp{D'}\circ\left(I\otimes\theta\right)
= (I\otimes e_1)\circ\psi(A)\circ\left(I\otimes\theta\right)
=(I\otimes e_1)\circ(A\otimes\theta)
=A\otimes\left(e_1\circ\theta\right)
= A  $
\qed\end{proof}

\section{Discussion on the New Rules}
\label{sec:discussion}

\label{sec:dicussion}
We can try and give an explanation of the rule \com, in terms of commutation of controlled operations. 
Consider the following diagram:
\[
\InputIfFileExists{control-2-alpha.tikz}{}{\input{./figures/control-2-alpha.tikz}}
\]
It is a \emph{controlled operation}. Indeed, if \scalebox{0.8}{
\begin{tikzpicture}
	\begin{pgfonlayer}{nodelayer}
		\node [style=rn] (0) at (0, 0.25) {};
		\node [style=none] (1) at (0, -0.25) {};
	\end{pgfonlayer}
	\begin{pgfonlayer}{edgelayer}
		\draw [style=none] (0) to (1.center);
	\end{pgfonlayer}
\end{tikzpicture}}
} is plugged on the left wire, we obtain the identity on the right one:
\def\fig{control-2-alpha-0}
\begin{align*}
\input{./figures/\fig/\fig_00.tikz}
\eq{\kt\\\bo\\\ref{lem:bicolor-0-alpha}\\\so}\input{./figures/\fig/\fig_01.tikz}
\eq{\st\\\so}\input{./figures/\fig/\fig_02.tikz}
\end{align*}
and if \scalebox{0.8}{
\begin{tikzpicture}
	\begin{pgfonlayer}{nodelayer}
		\node [style=rn] (0) at (0, 0.25) {$\pi$};
		\node [style=none] (1) at (0, -0.25) {};
	\end{pgfonlayer}
	\begin{pgfonlayer}{edgelayer}
		\draw [style=none] (0) to (1.center);
	\end{pgfonlayer}
\end{tikzpicture}}
} is plugged:
\def\fig{control-2-alpha-1}
\begin{align*}
\input{./figures/\fig/\fig_00.tikz}
\eq{\kt\\\so\\\ref{lem:k1}}\input{./figures/\fig/\fig_01.tikz}
\eq{\bo\\\ref{lem:bicolor-0-alpha}\\\so}\input{./figures/\fig/\fig_02.tikz}
\eq{\st\\\so}\input{./figures/\fig/\fig_03.tikz}
\end{align*}
Hence, the diagram is actually the controlled $Z$-rotation $\Lambda R_Z(2\alpha)$. Similarly, one can show that the following diagram represents the controlled $X$-rotation $\Lambda R_X(2\alpha)$:
\[
\InputIfFileExists{control-X-alpha.tikz}{}{\input{./figures/control-X-alpha.tikz}}
\]

A controlled operation $\Lambda U$ can be turned into an  \emph{``anti-controlled'' operation} $\overline{\Lambda}U$ (where the role of \scalebox{0.8}{
}
} and \scalebox{0.8}{
}
} are exchanged) by adding two 
 red $\pi$-nodes on the first wire, one on the input and one on the output. 
 For instance, $\overline \Lambda R_Z(2\beta)$ is expressed as:
\[
\InputIfFileExists{anti-control-beta.tikz}{}{\input{./figures/anti-control-beta.tikz}}
\]

Controlled and anti-controlled operations obviously commute: for any operations $U$ and $V$ of the same size, 
$\Lambda U \circ \overline \Lambda V = \overline \Lambda V \circ \Lambda U$. To derive this result in the ZX-calculus for $\Lambda R_X(2\alpha)$ and $\overline \Lambda R_Z(2\beta)$ the use of rule \com seems to be required:
\def\fig{control-X-alpha-anti-control-Z-beta-2}
\begin{align*}
\input{./figures/\fig/\fig_00.tikz}
&\eq{\kt\\\ref{lem:k1}\\\so\\\bo}\input{./figures/\fig/\fig_01.tikz}
&\eq{\com}\input{./figures/\fig/\fig_02.tikz}
\eq{\kt\\\ref{lem:k1}\\\so\\\bo}\input{./figures/\fig/\fig_03.tikz}
\end{align*}

The rule \com itself actually goes a step further and directly expresses the equality $\Lambda U(\alpha,\gamma) \circ \overline \Lambda (R_Z(2\beta)\otimes \mathbb{I}) = \overline \Lambda (R_Z(2\beta)\otimes \mathbb{I}) \circ \Lambda U(\alpha,\gamma)$ for
\[U(\alpha,\gamma) = 
\InputIfFileExists{diagram-U.tikz}{}{\input{./figures/diagram-U.tikz}}
\]
$\Lambda U(\alpha,\gamma)$ is a control of a $2$-qubits operation, such as the Toffoli gate, which controls CNOT. Because the operation it controls (CNOT) is also a controlled operation, it can also be seen as an operation on one wire, controlled by two others.

Since the Toffoli gate is represented by a matrix over $\mathbb{Z}$, it is possible to express it in the \frag4 of the ZX-Calculus. Indeed, using the \scalebox{0.8}{
\InputIfFileExists{ug-node.tikz}{}{\input{./figures/ug-node.tikz}}
}-notation:
\def\fig{toffoli-from-triangle}
\[\input{./figures/\fig/\fig_00.tikz}\quad:=\quad\input{./figures/\fig/\fig_01.tikz}\]
One can check that this diagram actually represents the Toffoli gate, for instance by deriving, for any $k,\ell \in \{0,1\}$: 
\def\fig{toffoli-red-state}
\begin{align*}
\input{./figures/\fig/\fig_00.tikz}\eq{}\input{./figures/\fig/\fig_01.tikz}
\end{align*}
More surprisingly, by plugging particular states, operation and projectors on the Toffoli gate, one can recover the triangle node:
\def\fig{triangle-from-toffoli}
\[\input{./figures/\fig/\fig_00.tikz}\eq{}\input{./figures/\fig/\fig_01.tikz}\]
This shows a connection between the diagram denoted \scalebox{0.8}{
\InputIfFileExists{ug-node.tikz}{}{\input{./figures/ug-node.tikz}}
} and the Toffoli gate. As pointed out in section \ref{section:ZWZX}, the rule \nfi can be greatly simplified when using the \scalebox{0.8}{
\InputIfFileExists{ug-node.tikz}{}{\input{./figures/ug-node.tikz}}
}-notation.
Hence, \nfi might help to find rules for a potential axiomatisation of the quantum circuits, or in fact any graphical language that contains the Toffoli gate.

\vspace{0.2cm}

We leave open the question of minimality of the axiomatisation in Figure \ref{fig:ZX_rules}. Many of the previous axioms were proven to be necessary (not derivable from the others) \cite{towards-minimal,supplementarity}, but some proofs may not hold with the addition of the two new axioms, and there is currently no known proof of the necessity of \com and \nfi.
\section*{Acknowledgements}                          
The authors acknowledge support from the projects ANR-17-CE25-0009 SoftQPro, ANR-17-CE24-0035 VanQuTe, PIA-GDN/Quantex, and STIC-AmSud 16-STIC-05 FoQCoSS. All diagrams were written with the help of TikZit.


\appendix
\section{Appendix}

In this appendix (\ref{prf:rules-preserved}, \ref{prf:left-inverse-ZX}) are the proofs of Propositions \ref{prop:rules-preserved} and \ref{prop:left-inverse-ZX}. To simplify the following work, we use the new node introduced as a notation in Section \ref{section:ZWZX}, and give a few lemmas in Section \ref{app:lemmas}, and prove them in Section \ref{app:proof-of-lemmas}. Keep in mind that for any provable equation, its upside down version, its colour-swapped version, and (after Lemma \ref{lem:-1-interp}) its version with opposed angles are all provable.

\subsection{Lemmas}
\label{app:lemmas}
\begin{multicols}{3}
\begin{lemma}
\label{lem:2-is-sqrt2-squared}
\[
\InputIfFileExists{2-is-sqrt2-squared.tikz}{}{\input{./figures/2-is-sqrt2-squared.tikz}}
\]
\end{lemma}

\begin{lemma}
\label{lem:hopf}
\[
\InputIfFileExists{hopf.tikz}{}{\input{./figures/hopf.tikz}}
\]
\end{lemma}

\begin{lemma}
\label{lem:k1}
\[
\InputIfFileExists{k1.tikz}{}{\input{./figures/k1.tikz}}
\]
\end{lemma}

\begin{lemma}
\label{lem:inverse}
\[
\InputIfFileExists{inverse.tikz}{}{\input{./figures/inverse.tikz}}
\]
\end{lemma}

\begin{lemma}
\label{lem:multiplying-global-phases}
\[
\InputIfFileExists{multiplying-global-phases.tikz}{}{\input{./figures/multiplying-global-phases.tikz}}
\]
\end{lemma}

\begin{lemma}
\label{lem:bicolor-0-alpha}
\[
\InputIfFileExists{bicolor-0-alpha.tikz}{}{\input{./figures/bicolor-0-alpha.tikz}}
\]
\end{lemma}

\begin{lemma}
\label{lem:hadamard-involution}
\[
\InputIfFileExists{hadamard-involution.tikz}{}{\input{./figures/hadamard-involution.tikz}}
\]
\end{lemma}

\begin{lemma}
\label{lem:h-loop}
\[
\InputIfFileExists{h-loop.tikz}{}{\input{./figures/h-loop.tikz}}
\]
\end{lemma}

\begin{lemma}
\label{lem:red-pi_2-around-green-node}
\[
\InputIfFileExists{lemma-red-pi_2-around-green-node.tikz}{}{\input{./figures/lemma-red-pi_2-around-green-node.tikz}}
\]
\end{lemma}

\begin{lemma}
\label{lem:6-cycle-bialgebra}
\[{
\InputIfFileExists{6-cycle-bialgebra.tikz}{}{\input{./figures/6-cycle-bialgebra.tikz}}
}\]
\end{lemma}

\begin{lemma}
\label{lem:control-pi-and-anti-CNOT-commute}
\[
\InputIfFileExists{control-pi-and-anti-CNOT-commute.tikz}{}{\input{./figures/control-pi-and-anti-CNOT-commute.tikz}}
\]
\end{lemma}

\end{multicols}

\begin{lemma}
\label{lem:-1-interp}
Let $\interp{.}_{-1}$ be the interpretation that multiplies all the angles by $-1$. Then:
\[ \zx\vdash D_1=D_2 \iff \zx\vdash \interp{D_1}_{-1}=\interp{D_2}_{-1} \]
\end{lemma}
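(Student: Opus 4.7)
Since $\interp{\cdot}_{-1}$ is an involution on diagrams (negating every angle twice restores the original diagram syntactically), it is enough to prove the forward implication; the converse follows by applying that implication to $\interp{D_1}_{-1}$ and $\interp{D_2}_{-1}$.

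The plan is to proceed by induction on the length of a ZX-derivation of $D_1 = D_2$. The inductive step is immediate: angle negation commutes with spatial composition $\otimes$, with sequential composition $\circ$, and with the act of applying a rule inside a larger context. Consequently, any ZX-proof of $D_1 = D_2$ transforms, rule-application by rule-application, into a candidate ZX-proof of $\interp{D_1}_{-1} = \interp{D_2}_{-1}$, as soon as one can replay each individual axiom with all of its angles negated.

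The base case therefore reduces to the following claim: for every axiom in Figure~\ref{fig:ZX_rules}, the instance obtained by negating all of its angles is still derivable in ZX. The rules split into three families. First, rules with no angle parameters (\st, \sth, \bo, \bt, \h, and the topology rules) are literally invariant under negation, so nothing is to be done. Second, rules whose angles occur as free parameters (\so, \kt, \supp, \com) are handled by re-instantiating the schema with the negated parameters; \kt in particular falls here because its only hard-coded angle is $\pi$, which is self-inverse modulo $2\pi$. Third, the rules with hard-coded $\pi/4$ or $\pi/2$ angles, namely \e, \eu and \nfi, for which the negated instance must be derived explicitly.

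The main obstacle is this third family, especially \nfi, whose fixed $\pi/4$ angles admit no obvious diagrammatic symmetry sending $\pi/4$ to $-\pi/4$. For \e and \eu, the expected route is to combine the original axiom with the colour-swap symmetry and the spider rule \so, together with one or two invocations of \kt and \bo that flip the signs of the scalars involved. For \nfi the derivation is the delicate point: the plan is to use \e and \com to express each offending $\pi/4$ spider as a $-\pi/4$ spider accompanied by an absorbable scalar, then restore the diagram using \kt, \eu, and \h. Once these three hard-coded-angle axioms are verified, the induction closes and both directions of the lemma follow. \qed
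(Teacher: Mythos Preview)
Your overall strategy matches the paper's exactly: reduce to checking that each axiom survives angle negation, observe that only \e, \eu, and \nfi need explicit work, and use the involution to get the converse for free. The paper handles \e by two applications of \h (colour swap) followed by \e, and \eu by \so and the $\pi$-copy lemma (Lemma~\ref{lem:k1}) before invoking \eu itself; your sketch for these two is close enough.

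Where your plan diverges is the \nfi case, and here the proposed route through \e and \com is the weak point. The rule \e is a scalar identity (a $0\to 0$ equation), so it cannot by itself convert a $\pi/4$ spider sitting inside \nfi into a $-\pi/4$ spider, and there is no evident mechanism by which \com flips the sign of the fixed angles of \nfi. The paper's derivation is both simpler and more systematic: it uses \kt (together with \so, Lemma~\ref{lem:inverse}, and Lemma~\ref{lem:multiplying-global-phases}) to push red $\pi$ nodes through the diagram, which sends every green $\alpha$ to $-\alpha$ up to absorbable scalars; this reduces the negated instance of \nfi directly to \nfi itself, after which the same manoeuvre is undone on the other side. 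This ``conjugate by $\pi$ via \kt'' trick is the canonical ZX mechanism for angle negation, and it is the main idea your plan for \nfi is missing---you list \kt only as a cleanup step rather than as the engine of the argument.
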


\begin{multicols}{3}

\begin{lemma}
\label{lem:gn-pi_2-0-0-equals-sqrt2-exp-pi_4}
\[
\begin{tikzpicture}
	\begin{pgfonlayer}{nodelayer}
		\node [style=gn] (0) at (-1, -0) {$\frac{\pi}{2}$};
		\node [style=none] (1) at (0, -0) {=};
		\node [style=gn] (2) at (1, -0.2500001) {$\frac{\pi}{4}$};
		\node [style=rn] (3) at (1, 0.5) {$\pi$};
	\end{pgfonlayer}
	\begin{pgfonlayer}{edgelayer}
		\draw (3) to (2);
	\end{pgfonlayer}
\end{tikzpicture}}
\]
\end{lemma}

\begin{lemma}
\label{lem:green-state-pi_2-is-red-state-minus-pi_2}
\[
\InputIfFileExists{green-state-pi_2-is-red-state-minus-pi_2.tikz}{}{\input{./figures/green-state-pi_2-is-red-state-minus-pi_2.tikz}}
\]
\end{lemma}

\begin{lemma}
\label{lem:euler-decomp-with-scalar}
\[
\InputIfFileExists{euler-decomp-with-scalar.tikz}{}{\input{./figures/euler-decomp-with-scalar.tikz}}
\]
\end{lemma}

\begin{lemma}
\label{lem:C1-original}
\[
\InputIfFileExists{control-commutation-2.tikz}{}{\input{./figures/control-commutation-2.tikz}}
\]
\end{lemma}

\begin{lemma}
\label{lem:C1-bis}
\[
\InputIfFileExists{control-commutation-3.tikz}{}{\input{./figures/control-commutation-3.tikz}}
\]
\end{lemma}


\begin{lemma}
\label{lem:supp-to-minus-pi_4}
\[
\InputIfFileExists{supp-to-minus-pi_4.tikz}{}{\input{./figures/supp-to-minus-pi_4.tikz}}
\]
\end{lemma}

\begin{lemma}
\label{lem:triangle-decomp-2}
\[
\InputIfFileExists{triangle-decomp-2.tikz}{}{\input{./figures/triangle-decomp-2.tikz}}
\]
\end{lemma}

\begin{lemma}
\label{lem:not-triangle-is-symmetrical}
\[
\InputIfFileExists{not-ug-is-symmetrical-2.tikz}{}{\input{./figures/not-ug-is-symmetrical-2.tikz}}
\]
\end{lemma}

\begin{lemma}
\label{lem:red-state-on-triangle}
\[
\InputIfFileExists{red-state-on-triangle.tikz}{}{\input{./figures/red-state-on-triangle.tikz}}
\]
\end{lemma}

\begin{lemma}
\label{lem:pi-red-state-on-triangle}
\[
\InputIfFileExists{pi-red-state-on-triangle.tikz}{}{\input{./figures/pi-red-state-on-triangle.tikz}}
\]
\end{lemma}

\begin{lemma}
\label{lem:red-state-on-upside-down-triangle}
\[
\InputIfFileExists{red-state-on-upside-down-triangle.tikz}{}{\input{./figures/red-state-on-upside-down-triangle.tikz}}
\]
\end{lemma}

\begin{lemma}
\label{lem:pi-red-state-on-upside-down-triangle}
\[
\InputIfFileExists{pi-red-state-on-upside-down-triangle.tikz}{}{\input{./figures/pi-red-state-on-upside-down-triangle.tikz}}
\]
\end{lemma}

\begin{lemma}
\label{lem:pi-green-state-on-upside-down-triangle}
\[
\InputIfFileExists{pi-green-state-on-upside-down-triangle.tikz}{}{\input{./figures/pi-green-state-on-upside-down-triangle.tikz}}
\]
\end{lemma}

\begin{lemma}
\label{lem:pi-green-state-on-triangle}
\[
\InputIfFileExists{pi-green-state-on-triangle.tikz}{}{\input{./figures/pi-green-state-on-triangle.tikz}}
\]
\end{lemma}

\begin{lemma}
\label{lem:triangle-hadamard-parallel}
\[
\InputIfFileExists{anti-control-pi-control-green.tikz}{}{\input{./figures/anti-control-pi-control-green.tikz}}
\]
\end{lemma}

\begin{lemma}
\label{lem:looped-triangle}
\[
\InputIfFileExists{looped-triangle.tikz}{}{\input{./figures/looped-triangle.tikz}}
\]
\end{lemma}

\begin{lemma}
\label{lem:black-dot-swappable-outputs}
\[
\InputIfFileExists{lemma-black-dot-swappable-outputs.tikz}{}{\input{./figures/lemma-black-dot-swappable-outputs.tikz}}
\]
\end{lemma}

\begin{lemma}
\label{lem:not-ug-is-symmetrical}
\[
\InputIfFileExists{not-ug-is-symmetrical.tikz}{}{\input{./figures/not-ug-is-symmetrical.tikz}}
\]
\end{lemma}

\begin{lemma}
\label{lem:inverse-of-triangle}
\[
\InputIfFileExists{inverse-of-triangle.tikz}{}{\input{./figures/inverse-of-triangle.tikz}}
\]
\end{lemma}

\begin{lemma}
\label{lem:symmetric-diagram-with-triangle-hadamard}
\[
\InputIfFileExists{lemma-symmetric-diagram-with-triangle-hadamard.tikz}{}{\input{./figures/lemma-symmetric-diagram-with-triangle-hadamard.tikz}}
\]
\end{lemma}

\begin{lemma}
\label{lem:anti-control-hanging-branch-and-control-alpha-commute}
\[\fit{
\InputIfFileExists{anti-control-hanging-branch-and-control-alpha-commute.tikz}{}{\input{./figures/anti-control-hanging-branch-and-control-alpha-commute.tikz}}
}\]
\end{lemma}

\begin{lemma}
\label{lem:n-four}
\[
\InputIfFileExists{ug-and-not-W-commute-simplified.tikz}{}{\input{./figures/ug-and-not-W-commute-simplified.tikz}}
\]
\end{lemma}

\begin{lemma}
\label{lem:parallel-triangles}
\[
\InputIfFileExists{parallel-ugs-are-projections.tikz}{}{\input{./figures/parallel-ugs-are-projections.tikz}}
\]
\end{lemma}

\begin{lemma}
\label{lem:triangles-fork-absorbs-anti-CNOT}
\[
\InputIfFileExists{ug-fork-absorbs-anti-CNOT.tikz}{}{\input{./figures/ug-fork-absorbs-anti-CNOT.tikz}}
\] and \[
\InputIfFileExists{ug-fork-absorbs-CNOT.tikz}{}{\input{./figures/ug-fork-absorbs-CNOT.tikz}}
\]
\end{lemma}

\begin{lemma}
\label{lem:n-five}
\[
\InputIfFileExists{2-diagrams-of-control-triangle-axiom-simplified.tikz}{}{\input{./figures/2-diagrams-of-control-triangle-axiom-simplified.tikz}}
\]
\end{lemma}
\vfill\null
\end{multicols}

\subsection{Proof of Lemmas}
\label{app:proof-of-lemmas}

\begin{proof}[Lemmas \ref{lem:2-is-sqrt2-squared} 
 to \ref{lem:control-pi-and-anti-CNOT-commute}]$~$\\
Lemmas \ref{lem:multiplying-global-phases} and \ref{lem:bicolor-0-alpha} are proven in \cite{simplified-stabilizer,gen-supp}. The other lemmas are in the \frag2 and hence are derivable by completeness of this fragment.
\qed\end{proof}

\begin{proof}[Lemma \ref{lem:-1-interp}]
The result is quite obvious for all rules except maybe for \e, \eu and \nfi.
\begin{itemize}
\item \e:
\def\fig{bicolor_pi_4_eq_empty-minus-1}
\begin{align*}
\input{./figures/\fig/\fig_00.tikz}\eq{\h}\input{./figures/\fig/\fig_01.tikz}\eq{\h}
\input{./figures/\fig/\fig_02.tikz}\eq{\e}\input{./figures/\fig/\fig_03.tikz}
\end{align*}
\item \eu:
\def\fig{euler-decomp-scalar-free-minus-1}
\begin{align*}
\input{./figures/\fig/\fig_00.tikz}\eq{\so}\input{./figures/\fig/\fig_01.tikz}\eq{\ref{lem:k1}}
\input{./figures/\fig/\fig_02.tikz}\eq{\eu}\input{./figures/\fig/\fig_03.tikz}
\end{align*}
\item \nfi:
\def\fig{2-diagrams-of-control-triangle-axiom-minus-1-from-simplified}
\begin{align*}
\input{./figures/\fig/\fig_00.tikz}
\eq{\ref{lem:inverse}\\\kt\\\so\\\ref{lem:multiplying-global-phases}}\input{./figures/\fig/\fig_01.tikz}
\eq{\nfi}\input{./figures/\fig/\fig_02.tikz}
\eq{\kt\\\so\\\ref{lem:multiplying-global-phases}\\\ref{lem:inverse}}\input{./figures/\fig/\fig_03.tikz}
\end{align*}
\end{itemize}
Moreover, it is to be noticed that $\interp{
\InputIfFileExists{ug-node.tikz}{}{\input{./figures/ug-node.tikz}}
}_{-1} = 
\InputIfFileExists{ug-node.tikz}{}{\input{./figures/ug-node.tikz}}
$.
\qed\end{proof}


\begin{proof}[Lemma \ref{lem:gn-pi_2-0-0-equals-sqrt2-exp-pi_4}]
\def\fig{gn-pi_2-0-0-equals-sqrt2-exp-pi_4-proof}
\begin{align*}
\input{./figures/\fig/\fig_00.tikz}
\eq[~]{\e}\input{./figures/\fig/\fig_01.tikz}
\eq{\h}\input{./figures/\fig/\fig_02.tikz}
\eq{\eu}\input{./figures/\fig/\fig_03.tikz}
\eq{\kt\\\ref{lem:multiplying-global-phases}}\input{./figures/\fig/\fig_04.tikz}
\eq{\supp\\\ref{lem:hopf}}\input{./figures/\fig/\fig_05.tikz}
\eq[~]{\ref{lem:multiplying-global-phases}\\\ref{lem:inverse}}\input{./figures/\fig/\fig_06.tikz}
\end{align*}
\qed\end{proof}

\begin{proof}[Lemma
  \ref{lem:green-state-pi_2-is-red-state-minus-pi_2}]
\def\fig{green-state-pi_2-is-red-state-minus-pi_2-proof}
\begin{align*}
\input{./figures/\fig/\fig_00.tikz}\eq{\h}\input{./figures/\fig/\fig_01.tikz}\eq{\eu}\input{./figures/\fig/\fig_02.tikz}\eq{\bo}\input{./figures/\fig/\fig_03.tikz}\eq{\ref{lem:gn-pi_2-0-0-equals-sqrt2-exp-pi_4}\\\ref{lem:multiplying-global-phases}}\input{./figures/\fig/\fig_04.tikz}
\end{align*}
\qed\end{proof}

\begin{proof}[Lemma \ref{lem:euler-decomp-with-scalar}]
\def\fig{euler-decomp-with-scalar-proof}
\begin{align*}
\input{./figures/\fig/\fig_00.tikz}
\eq[]{\eu}\input{./figures/\fig/\fig_01.tikz}
\eq[]{\h\\\ref{lem:inverse}\\\ref{lem:multiplying-global-phases}}\input{./figures/\fig/\fig_02.tikz}
\eq[]{\ref{lem:green-state-pi_2-is-red-state-minus-pi_2}}\input{./figures/\fig/\fig_03.tikz}
\eq[]{\h\\\so}\input{./figures/\fig/\fig_04.tikz}
\end{align*}
\qed\end{proof}

\begin{proof}[Lemma \ref{lem:C1-original}]
\def\fig{control-commutation-2-proof}
\begin{align*}
\input{./figures/\fig/\fig_00.tikz}
\eq{\ref{lem:euler-decomp-with-scalar}\\\so}\input{./figures/\fig/\fig_01.tikz}
\eq{\bo}\input{./figures/\fig/\fig_02.tikz}
\eq[~]{\com}\input{./figures/\fig/\fig_03.tikz}
\eq[~]{\bo}\input{./figures/\fig/\fig_04.tikz}
\eq{\so\\\ref{lem:euler-decomp-with-scalar}}\input{./figures/\fig/\fig_05.tikz}
\end{align*}
\qed\end{proof}

\begin{proof}[Lemma \ref{lem:C1-bis}]
By completeness of the \frag2:
\[{
\InputIfFileExists{lemma-for-equivalence-c1.tikz}{}{\input{./figures/lemma-for-equivalence-c1.tikz}}
}\]
Then:
\def\fig{control-commutation-3-proof}
\begin{align*}
\input{./figures/\fig/\fig_00.tikz}
\eq{}\input{./figures/\fig/\fig_01.tikz}
\eq{\ref{lem:C1-original}}\input{./figures/\fig/\fig_02.tikz}
\eq{\ref{lem:hadamard-involution}}\input{./figures/\fig/\fig_03.tikz}
\eq{\ref{lem:hopf}}\input{./figures/\fig/\fig_04.tikz}
\end{align*}
\qed\end{proof}


\begin{proof}[Lemma \ref{lem:supp-to-minus-pi_4}]
\def\fig{supp-to-minus-pi_4-proof}
\begin{align*}
\input{./figures/\fig/\fig_00.tikz}
\eq{\ref{lem:inverse}\\\kt\\\ref{lem:multiplying-global-phases}}\input{./figures/\fig/\fig_01.tikz}
\eq{\ref{lem:C1-bis}\\\ref{lem:inverse}}\input{./figures/\fig/\fig_02.tikz}
\eq{\ref{lem:green-state-pi_2-is-red-state-minus-pi_2}\\\ref{lem:multiplying-global-phases}}\input{./figures/\fig/\fig_03.tikz}
\eq{\ref{lem:euler-decomp-with-scalar}\\\so\\\st\\\ref{lem:multiplying-global-phases}}\input{./figures/\fig/\fig_04.tikz}\\
\eq{\so\\\st\\\ref{lem:inverse}\\\ref{lem:hopf}}\input{./figures/\fig/\fig_05.tikz}
\eq{\kt\\\so\\\ref{lem:multiplying-global-phases}}\input{./figures/\fig/\fig_06.tikz}
\eq{\ref{lem:2-is-sqrt2-squared}\\\ref{lem:inverse}}\input{./figures/\fig/\fig_07.tikz}
\end{align*}
\qed\end{proof}


\begin{proof}[Lemma \ref{lem:triangle-decomp-2}] 
\def\fig{triangle-decomp-2-proof}
\begin{align*}
\input{./figures/\fig/\fig_00.tikz}
\eq{\ref{def:triangle}\\\kt\\\ref{lem:multiplying-global-phases}}\input{./figures/\fig/\fig_01.tikz}
\eq{\ref{lem:hopf}\\\so}\input{./figures/\fig/\fig_02.tikz}
\eq{\ref{lem:hadamard-involution}\\\ref{lem:C1-original}}\input{./figures/\fig/\fig_03.tikz}
\eq{\h\\\ref{lem:euler-decomp-with-scalar}\\\ref{lem:multiplying-global-phases}}\input{./figures/\fig/\fig_04.tikz}\\
\eq{\ref{lem:red-pi_2-around-green-node}\\\ref{lem:hopf}}\input{./figures/\fig/\fig_05.tikz}
\eq{\ref{lem:euler-decomp-with-scalar}\\\ref{lem:multiplying-global-phases}}\input{./figures/\fig/\fig_06.tikz}
\eq{\h}\input{./figures/\fig/\fig_07.tikz}
\end{align*}
\qed\end{proof}

\begin{proof}[Lemma \ref{lem:not-triangle-is-symmetrical}]
\def\fig{not-triangle-is-symmetrical-proof}
\begin{align*}
\input{./figures/\fig/\fig_00.tikz}
\eq{\ref{lem:triangle-decomp-2}}\input{./figures/\fig/\fig_01.tikz}
\eq{\ref{lem:6-cycle-bialgebra}\\\so}\input{./figures/\fig/\fig_02.tikz}
\eq{\so}\input{./figures/\fig/\fig_03.tikz}
\eq{\ref{lem:triangle-decomp-2}}\input{./figures/\fig/\fig_04.tikz}
\end{align*}
\qed\end{proof}

\begin{proof}[Lemma \ref{lem:red-state-on-triangle}]
\def\fig{red-state-on-triangle-proof}
\begin{align*}
\input{./figures/\fig/\fig_00.tikz}
\eq{\ref{def:triangle}}\input{./figures/\fig/\fig_01.tikz}
\eq{\ref{lem:inverse}\\\bo\\\so}\input{./figures/\fig/\fig_02.tikz}
\eq{\ref{lem:gn-pi_2-0-0-equals-sqrt2-exp-pi_4}}\input{./figures/\fig/\fig_03.tikz}
\eq{\ref{lem:green-state-pi_2-is-red-state-minus-pi_2}\\\ref{lem:multiplying-global-phases}}\input{./figures/\fig/\fig_04.tikz}
\eq{\so}\input{./figures/\fig/\fig_05.tikz}
\end{align*}
\qed\end{proof}

\begin{proof}[Lemma \ref{lem:pi-red-state-on-triangle}]
\def\fig{pi-red-state-on-triangle-proof}
\begin{align*}
\input{./figures/\fig/\fig_00.tikz}
\eq{\ref{def:triangle}}\input{./figures/\fig/\fig_01.tikz}
\eq{\ref{lem:k1}\\\bo}\input{./figures/\fig/\fig_02.tikz}
\eq{\kt\\\ref{lem:multiplying-global-phases}}\input{./figures/\fig/\fig_03.tikz}
\eq{\so\\\bo\\\ref{lem:bicolor-0-alpha}}\input{./figures/\fig/\fig_04.tikz}
\eq{\ref{lem:2-is-sqrt2-squared}\\\ref{lem:inverse}}\input{./figures/\fig/\fig_05.tikz}
\end{align*}
\qed\end{proof}

\begin{proof}[Lemmas \ref{lem:red-state-on-upside-down-triangle} and \ref{lem:pi-red-state-on-upside-down-triangle}]
The result comes naturally from \ref{lem:not-triangle-is-symmetrical}, \ref{lem:pi-red-state-on-triangle} and \ref{lem:red-state-on-triangle}.
\qed\end{proof}

\begin{proof}[Lemma \ref{lem:pi-green-state-on-upside-down-triangle}]
\def\fig{pi-green-state-on-upside-down-triangle-proof}
\begin{align*}
\input{./figures/\fig/\fig_00.tikz}
\eq[~]{\ref{def:triangle}}\input{./figures/\fig/\fig_01.tikz}
\eq[~]{\so\\\ref{lem:k1}\\\bo}\input{./figures/\fig/\fig_02.tikz}
\eq[~]{\supp\\\ref{lem:inverse}\\\ref{lem:hopf}}\input{./figures/\fig/\fig_03.tikz}
\eq[~]{\bo\\\so}\input{./figures/\fig/\fig_04.tikz}
\eq[~]{\ref{lem:gn-pi_2-0-0-equals-sqrt2-exp-pi_4}}\input{./figures/\fig/\fig_05.tikz}
\eq[~]{\ref{lem:multiplying-global-phases}\\\ref{lem:inverse}}\input{./figures/\fig/\fig_06.tikz}
\end{align*}
\qed\end{proof}

\begin{proof}[Lemma \ref{lem:pi-green-state-on-triangle}]
\def\fig{pi-green-state-on-triangle-proof}
\begin{align*}
\input{./figures/\fig/\fig_00.tikz}
\eq{\so\\\ref{lem:not-triangle-is-symmetrical}}\input{./figures/\fig/\fig_01.tikz}
\eq{\so\\\ref{lem:k1}\\\bo}\input{./figures/\fig/\fig_02.tikz}
\eq{\ref{lem:inverse}\\\ref{lem:pi-green-state-on-upside-down-triangle}}\input{./figures/\fig/\fig_03.tikz}
\eq{\so\\\ref{lem:k1}\\\bo\\\ref{lem:multiplying-global-phases}}\input{./figures/\fig/\fig_04.tikz}
\end{align*}
\qed\end{proof}

\begin{proof}[Lemma \ref{lem:triangle-hadamard-parallel}]
\def\fig{anti-control-pi-control-green-aux}
\begin{align*}
\input{./figures/\fig/\fig_00.tikz}
\eq{\h}\input{./figures/\fig/\fig_01.tikz}
\eq{\ref{lem:euler-decomp-with-scalar}\\\so\\\st\\\ref{lem:inverse}}\input{./figures/\fig/\fig_02.tikz}
\eq{\bt}\input{./figures/\fig/\fig_03.tikz}
\eq{\so\\\kt\\\ref{lem:multiplying-global-phases}}\input{./figures/\fig/\fig_04.tikz}
\eq{\ref{lem:euler-decomp-with-scalar}\\\so\\\st}\input{./figures/\fig/\fig_05.tikz}
\end{align*}
Then:
\def\fig{anti-control-pi-control-green-fin}
\begin{align*}
\input{./figures/\fig/\fig_00.tikz}
\eq{\ref{def:triangle}\\\so}\input{./figures/\fig/\fig_01.tikz}
\eq{}\input{./figures/\fig/\fig_02.tikz}
\eq{\ref{def:triangle}}\input{./figures/\fig/\fig_03.tikz}
\end{align*}
\qed\end{proof}

\begin{proof}[Lemma \ref{lem:looped-triangle}]
\def\fig{looped-triangle-proof}
\begin{align*}
\input{./figures/\fig/\fig_00.tikz}
\quad\eq[]{\ref{def:triangle}}\input{./figures/\fig/\fig_01.tikz}
\eq{\bt}\input{./figures/\fig/\fig_02.tikz}
\eq{\so}\input{./figures/\fig/\fig_03.tikz}
\eq{\ref{def:triangle}}\input{./figures/\fig/\fig_04.tikz}
\end{align*}
\qed\end{proof}

\begin{proof}[Lemma \ref{lem:black-dot-swappable-outputs}]
\def\fig{lemma-black-dot-swappable-outputs-proof}
\begin{align*}
\input{./figures/\fig/\fig_00.tikz}
\eq[\quad]{\bt}\input{./figures/\fig/\fig_01.tikz}
\eq[\quad]{\so\\\ref{lem:looped-triangle}}\input{./figures/\fig/\fig_02.tikz}
\end{align*}
\qed\end{proof}

\begin{proof}[Lemma \ref{lem:not-ug-is-symmetrical}]
\def\fig{not-ug-is-symmetrical-proof-from-simplified}
\begin{align*}
\input{./figures/\fig/\fig_00.tikz}
\eq{\ref{def:triangle}}\input{./figures/\fig/\fig_01.tikz}
\eq{\ref{lem:inverse}\\\kt\\\so}\input{./figures/\fig/\fig_02.tikz}
\eq{\so\\\st\\\kt\\\ref{lem:multiplying-global-phases}}\input{./figures/\fig/\fig_03.tikz}
\eq{\nfi}\input{./figures/\fig/\fig_04.tikz}\\
\eq{\ref{lem:k1}\\\kt\\\ref{lem:multiplying-global-phases}\\\ref{lem:inverse}}\input{./figures/\fig/\fig_05.tikz}
\eq{\ref{def:triangle}}\input{./figures/\fig/\fig_06.tikz}
\end{align*}
\qed\end{proof}

\begin{proof}[Lemma \ref{lem:inverse-of-triangle}]
\def\fig{inverse-of-triangle-proof}
\begin{align*}
\input{./figures/\fig/\fig_00.tikz}
\eq{\st\\\so\\\ref{lem:not-triangle-is-symmetrical}}\input{./figures/\fig/\fig_01.tikz}
\eq{\ref{def:triangle}\\\ref{lem:k1}\\\so}\input{./figures/\fig/\fig_02.tikz}
\eq{\ref{lem:supp-to-minus-pi_4}\\\ref{lem:k1}\\\ref{lem:inverse}\\\kt\\\ref{lem:multiplying-global-phases}}\input{./figures/\fig/\fig_03.tikz}
\eq{\supp\\\ref{lem:inverse}\\\ref{lem:hopf}}\input{./figures/\fig/\fig_04.tikz}
\eq{\ref{lem:gn-pi_2-0-0-equals-sqrt2-exp-pi_4}\\\kt\\\so}\input{./figures/\fig/\fig_05.tikz}
\eq{\so\\\st\\\ref{lem:multiplying-global-phases}\\\ref{lem:inverse}}\input{./figures/\fig/\fig_06.tikz}
\end{align*}
\qed\end{proof}

\begin{proof}[Lemma \ref{lem:symmetric-diagram-with-triangle-hadamard}]
First:
\def\fig{symmetric-diagram-with-triangle-hadamard-proof-1}
\begin{align*}
\input{./figures/\fig/\fig_00.tikz}
\eq{\ref{lem:euler-decomp-with-scalar}}\input{./figures/\fig/\fig_01.tikz}
\eq{\ref{lem:multiplying-global-phases}\\\ref{lem:gn-pi_2-0-0-equals-sqrt2-exp-pi_4}\\\so\\\st\\\supp}\input{./figures/\fig/\fig_02.tikz}
\eq{\st\\\kt\\\ref{lem:multiplying-global-phases}\\\ref{lem:k1}}\input{./figures/\fig/\fig_03.tikz}
\eq{\ref{lem:multiplying-global-phases}\\\ref{lem:supp-to-minus-pi_4}}\input{./figures/\fig/\fig_04.tikz}\\
\eq{\ref{def:triangle}\\\kt\\\ref{lem:multiplying-global-phases}}\input{./figures/\fig/\fig_05.tikz}
\eq{\ref{def:triangle}}\input{./figures/\fig/\fig_06.tikz}
\eq{\ref{lem:not-triangle-is-symmetrical}}\input{./figures/\fig/\fig_07.tikz}
\end{align*}
Then:
\def\fig{symmetric-diagram-with-triangle-hadamard-proof-2}
\begin{align*}
\input{./figures/\fig/\fig_00.tikz}
\eq{\ref{lem:inverse-of-triangle}}\input{./figures/\fig/\fig_01.tikz}
\eq{}\input{./figures/\fig/\fig_02.tikz}
\eq{\h}\input{./figures/\fig/\fig_03.tikz}
\eq{\ref{lem:not-triangle-is-symmetrical}}\input{./figures/\fig/\fig_04.tikz}
\end{align*}
\qed\end{proof}

\begin{proof}[Lemma \ref{lem:anti-control-hanging-branch-and-control-alpha-commute}]
First:
\def\fig{anti-control-hanging-branch-and-control-alpha-commute-proof-aux}
\begin{align*}
\input{./figures/\fig/\fig_00.tikz}
\eq{\h}\input{./figures/\fig/\fig_01.tikz}
\eq{\ref{lem:symmetric-diagram-with-triangle-hadamard}}\input{./figures/\fig/\fig_02.tikz}
\eq{\ref{def:triangle}\\\bt\\\ref{lem:k1}}\input{./figures/\fig/\fig_03.tikz}\\
\eq{\h}\input{./figures/\fig/\fig_04.tikz}
\eq{\ref{lem:C1-original}}\input{./figures/\fig/\fig_05.tikz}
\eq{\h\\\ref{lem:hopf}\\\so}\input{./figures/\fig/\fig_06.tikz}\\
\eq{\ref{lem:green-state-pi_2-is-red-state-minus-pi_2}}\input{./figures/\fig/\fig_07.tikz}
\eq{\ref{lem:supp-to-minus-pi_4}\\\st}\input{./figures/\fig/\fig_08.tikz}
\eq{\so\\\ref{lem:inverse}\\\ref{lem:hopf}\\\st}\input{./figures/\fig/\fig_09.tikz}
\eq{\h}\input{./figures/\fig/\fig_10.tikz}
\end{align*}
then:
\def\fig{anti-control-hanging-branch-and-control-alpha-commute-proof-fin}
\begin{align*}
\input{./figures/\fig/\fig_00.tikz}
\eq{\so\\\ref{lem:inverse}}\!\!\!\!\!\input{./figures/\fig/\fig_01.tikz}
\eq{\so\\\ref{lem:inverse}\\\ref{lem:hopf}\\\st\\\ref{def:triangle}}\input{./figures/\fig/\fig_02.tikz}
\eq{\bt\\\ref{lem:inverse}}\input{./figures/\fig/\fig_03.tikz}\\
\eq{\com}\input{./figures/\fig/\fig_04.tikz}
\eq{}\input{./figures/\fig/\fig_05.tikz}
\end{align*}
\qed\end{proof}

\begin{proof}[Lemma \ref{lem:n-four}]
\def\fig{ug-and-not-W-commute-simplified-proof-from-new-c1}
\begin{align*}
\input{./figures/\fig/\fig_00.tikz}
\eq{\st\\\so\\\ref{lem:not-triangle-is-symmetrical}}\input{./figures/\fig/\fig_01.tikz}
\eq{\ref{def:triangle}\\\ref{lem:k1}\\\so}\input{./figures/\fig/\fig_02.tikz}
\eq{\ref{lem:supp-to-minus-pi_4}\\\kt\\\ref{lem:multiplying-global-phases}}\input{./figures/\fig/\fig_03.tikz}
\eq{\bt\\\ref{lem:euler-decomp-with-scalar}}\input{./figures/\fig/\fig_04.tikz}\\
\eq{\h\\\so}\input{./figures/\fig/\fig_05.tikz}
\eq{\ref{lem:anti-control-hanging-branch-and-control-alpha-commute}\\\h}\input{./figures/\fig/\fig_06.tikz}
\eq{\ref{lem:euler-decomp-with-scalar}}\input{./figures/\fig/\fig_07.tikz}
\eq{\so\\\ref{lem:control-pi-and-anti-CNOT-commute}}\input{./figures/\fig/\fig_08.tikz}\\
\eq{\ref{lem:euler-decomp-with-scalar}}\input{./figures/\fig/\fig_09.tikz}
\eq{\h\\\kt\\\ref{lem:multiplying-global-phases}}\input{./figures/\fig/\fig_10.tikz}
\eq{\ref{lem:euler-decomp-with-scalar}\\\ref{lem:multiplying-global-phases}}\input{./figures/\fig/\fig_11.tikz}
\eq{\ref{lem:supp-to-minus-pi_4}\\\ref{def:triangle}\\\ref{lem:not-triangle-is-symmetrical}}\input{./figures/\fig/\fig_12.tikz}
\end{align*}
\qed\end{proof}


\begin{proof}[Lemma \ref{lem:parallel-triangles}]
First:
\def\fig{parallel-triangles-are-projections-proof}
\begin{align*}
\input{./figures/\fig/\fig_00.tikz}
\eq{\ref{lem:red-pi_2-around-green-node}}\input{./figures/\fig/\fig_01.tikz}
\eq{\so\\\h}\input{./figures/\fig/\fig_02.tikz}
\eq{\ref{lem:C1-original}\\\ref{lem:hadamard-involution}}\input{./figures/\fig/\fig_03.tikz}
\eq{\eu\\\so\\\ref{lem:green-state-pi_2-is-red-state-minus-pi_2}\\\ref{lem:inverse}}\input{./figures/\fig/\fig_04.tikz}\\
\eq{\ref{lem:inverse}\\\bt}\input{./figures/\fig/\fig_05.tikz}
\eq{\ref{lem:inverse}\\\bo\\\so\\\kt\\\ref{lem:multiplying-global-phases}}\input{./figures/\fig/\fig_06.tikz}
\end{align*}
Then:
\def\fig{parallel-triangles-are-projections-proof-fin}
\begin{align*}
\input{./figures/\fig/\fig_00.tikz}
\eq{\ref{def:triangle}}\input{./figures/\fig/\fig_01.tikz}
\eq{}\input{./figures/\fig/\fig_02.tikz}
\eq{\ref{def:triangle}\\\ref{lem:supp-to-minus-pi_4}}\input{./figures/\fig/\fig_03.tikz}
\eq{\ref{lem:multiplying-global-phases}}\input{./figures/\fig/\fig_04.tikz}
\end{align*}
\qed\end{proof}

\begin{proof}[Lemma \ref{lem:triangles-fork-absorbs-anti-CNOT}]
First:
\def\fig{ug-fork-absorbs-CNOT-proof}
\begin{align*}
\input{./figures/\fig/\fig_00.tikz}
\eq{\ref{def:triangle}\\\so}\input{./figures/\fig/\fig_01.tikz}
\eq{\so\\\ref{lem:inverse}\\\ref{lem:euler-decomp-with-scalar}}\input{./figures/\fig/\fig_02.tikz}
\eq{\ref{lem:C1-bis}\\\so}\input{./figures/\fig/\fig_03.tikz}\\
\eq{\h\\\so}\input{./figures/\fig/\fig_04.tikz}
\eq{\ref{lem:hopf}\\\ref{lem:euler-decomp-with-scalar}}\input{./figures/\fig/\fig_05.tikz}
\eq{\ref{lem:inverse}\\\ref{def:triangle}}\input{./figures/\fig/\fig_06.tikz}
\end{align*}
Then:
\def\fig{ug-fork-absorbs-anti-CNOT-proof}
\begin{align*}
\input{./figures/\fig/\fig_00.tikz}
\eq{\so\\\ref{lem:not-triangle-is-symmetrical}\\\ref{lem:k1}}\input{./figures/\fig/\fig_01.tikz}
\eq{\so\\\ref{lem:not-triangle-is-symmetrical}\\\ref{lem:k1}}\input{./figures/\fig/\fig_02.tikz}
\eq{}\input{./figures/\fig/\fig_03.tikz}
\eq{\so\\\ref{lem:not-triangle-is-symmetrical}\\\ref{lem:k1}}\input{./figures/\fig/\fig_04.tikz}
\end{align*}
\qed\end{proof}

\begin{proof}[Lemma \ref{lem:n-five}]
First:
\def\fig{2-diagrams-of-control-triangle-from-simplified-BW-1}
\begin{align*}
\input{./figures/\fig/\fig_00.tikz}
\eq{\st\\\so\\\ref{lem:not-ug-is-symmetrical}}\input{./figures/\fig/\fig_01.tikz}
\eq{\h}\input{./figures/\fig/\fig_02.tikz}
\eq{\ref{lem:symmetric-diagram-with-triangle-hadamard}}\input{./figures/\fig/\fig_03.tikz}
\eq{\ref{lem:k1}\\\ref{lem:not-triangle-is-symmetrical}}\input{./figures/\fig/\fig_04.tikz}
\eq{\st\\\bo\\\ref{lem:inverse}}\input{./figures/\fig/\fig_05.tikz}
\eq{\ref{lem:n-four}}\input{./figures/\fig/\fig_06.tikz}
\end{align*}
Moreover, from \ref{lem:symmetric-diagram-with-triangle-hadamard}, we can easily derive:
\def\fig{2-diagrams-of-control-triangle-from-simplified-BW-2}
\begin{align*}
\input{./figures/\fig/\fig_00.tikz}
\eq{\st\\\so\\\ref{lem:symmetric-diagram-with-triangle-hadamard}}\input{./figures/\fig/\fig_01.tikz}
\eq{\h\\\ref{lem:k1}\\\ref{lem:not-triangle-is-symmetrical}}\input{./figures/\fig/\fig_02.tikz}
\end{align*}
and:
\def\fig{2-diagrams-of-control-triangle-from-simplified-BW-3}
\begin{align*}
\input{./figures/\fig/\fig_00.tikz}
\eq{\st\\\so\\\ref{lem:not-triangle-is-symmetrical}}\input{./figures/\fig/\fig_01.tikz}
\eq{\ref{lem:symmetric-diagram-with-triangle-hadamard}}\input{./figures/\fig/\fig_02.tikz}
\eq{\h}\input{./figures/\fig/\fig_03.tikz}
\end{align*}
Finally:
\def\fig{2-diagrams-of-control-triangle-from-simplified-BW-4}
\begin{align*}
\input{./figures/\fig/\fig_00.tikz}
\eq{\h}\input{./figures/\fig/\fig_01.tikz}
\eq{}\input{./figures/\fig/\fig_02.tikz}
\eq{\ref{lem:k1}\\\ref{lem:n-four}\\\ref{lem:black-dot-swappable-outputs}\\\so}\input{./figures/\fig/\fig_03.tikz}
\eq{}\input{./figures/\fig/\fig_04.tikz}\\
\eq{\bt\\\ref{lem:inverse}}\input{./figures/\fig/\fig_05.tikz}
\eq{}\input{./figures/\fig/\fig_06.tikz}
\end{align*}
\qed\end{proof}
%
%

\subsection{Proof of Propositions \ref{prop:rules-preserved} and \ref{prop:left-inverse-ZX}}

We first derive an easy but useful lemma for the following:
\begin{lemma}
\label{lem:arity-1-black-dot}
As shown in Section \ref{section:ZWZX}:
\[
\InputIfFileExists{ZW-to-ZX-dot-1-2-simplified.tikz}{}{\input{./figures/ZW-to-ZX-dot-1-2-simplified.tikz}}
\]
Then:
\def\fig{W-1-0}
\begin{align*}
\input{./figures/\fig/\fig_00.tikz}\mapsto\input{./figures/\fig/\fig_01.tikz}
\eq{\ref{lem:inverse}\\\ref{lem:hopf}}\input{./figures/\fig/\fig_02.tikz}
\eq{\ref{lem:inverse}\\\bo\\\so}\input{./figures/\fig/\fig_03.tikz}
\eq{\ref{lem:red-state-on-triangle}}\input{./figures/\fig/\fig_04.tikz}
\eq{\ref{lem:inverse}}\input{./figures/\fig/\fig_05.tikz}
\end{align*}
\end{lemma}

\subsubsection{Proof of Proposition \ref{prop:rules-preserved}}
\label{prf:rules-preserved}
We prove here that all the rules of the \zwcalc are preserved by $\interpwx{.}$.\\
$\bullet$ X:
\def\fig{rule-X-proof-no-braid}
\begin{align*}
\input{./figures/\fig/\fig_00.tikz}~~\mapsto~~\input{./figures/\fig/\fig_01.tikz}
\eq{\so\\\bt}\input{./figures/\fig/\fig_02.tikz}
\eq{\h}\input{./figures/\fig/\fig_03.tikz}
\eq{\ref{lem:triangle-hadamard-parallel}\\\ref{lem:h-loop}}\input{./figures/\fig/\fig_04.tikz}~~\mapsfrom~~\input{./figures/\fig/\fig_05.tikz}
\end{align*}
$\bullet$ $0a$, $0c$, $0d$ and $0d'$ come directly from the paradigm \emph{Only Topology Matters}.\\
$\bullet$ $0b$:
\def\fig{rule-0b-proof-no-braid}
\begin{align*}
\input{./figures/\fig/\fig_00.tikz}~~\mapsto\input{./figures/\fig/\fig_01.tikz}
\eq{\ref{lem:k1}\\\ref{lem:not-triangle-is-symmetrical}\\\so}\input{./figures/\fig/\fig_02.tikz}
\eq{\ref{lem:black-dot-swappable-outputs}}\input{./figures/\fig/\fig_03.tikz}
\eq{\ref{lem:k1}\\\ref{lem:not-triangle-is-symmetrical}\\\so}\input{./figures/\fig/\fig_04.tikz}~\mapsfrom~~\input{./figures/\fig/\fig_05.tikz}
\end{align*}
$\bullet$ $0b'$: Using the result for rule $0b$,
\def\fig{rule-0bp-proof-no-braid}
\begin{align*}
\input{./figures/\fig/\fig_00.tikz}~~\mapsto~~\input{./figures/\fig/\fig_01.tikz}
\eq{\ref{lem:black-dot-swappable-outputs}}\input{./figures/\fig/\fig_02.tikz}
\eq{}\input{./figures/\fig/\fig_03.tikz}
\eq{\ref{lem:black-dot-swappable-outputs}}\input{./figures/\fig/\fig_04.tikz}~~\mapsfrom~~\input{./figures/\fig/\fig_05.tikz}
\end{align*}
$\bullet$ $1a$:
\def\fig{rule-1a-proof}
\begin{align*}
\input{./figures/\fig/\fig_00.tikz}~~\mapsto\input{./figures/\fig/\fig_01.tikz}
\eq{\bt\\\so}\input{./figures/\fig/\fig_02.tikz}
\eq{\so\\\ref{lem:n-four}}\input{./figures/\fig/\fig_03.tikz}
\eq{\bt}\input{./figures/\fig/\fig_04.tikz}~~\mapsfrom~~\input{./figures/\fig/\fig_05.tikz}
\end{align*}
$\bullet$ $1b$:
\def\fig{rule-1b-proof}
\begin{align*}
\input{./figures/\fig/\fig_00.tikz}~~\underset{\ref{lem:arity-1-black-dot}}{\mapsto}~~\input{./figures/\fig/\fig_01.tikz}
\eq{\ref{lem:black-dot-swappable-outputs}}\input{./figures/\fig/\fig_02.tikz}
\eq{\ref{lem:inverse}\\\bo\\\so}\input{./figures/\fig/\fig_03.tikz}
\eq{\ref{lem:red-state-on-upside-down-triangle}\\\st\\\ref{lem:inverse}}\input{./figures/\fig/\fig_04.tikz}
\eq{\so\\\st}\input{./figures/\fig/\fig_05.tikz}~~\mapsfrom~~\input{./figures/\fig/\fig_05.tikz}
\end{align*}
$\bullet$ $1c$, $1d$, $2a$ and $2b$ come directly from the spider
rules \so and \st.\\
$\bullet$ $3a$ is the expression of the colour-swapped version of Lemma \ref{lem:k1}.\\
$\bullet$ $3b$:
\def\fig{rule-3b-proof}
\begin{align*}
\input{./figures/\fig/\fig_00.tikz}\quad\mapsto\quad\input{./figures/\fig/\fig_01.tikz}
\eq[\quad]{\ref{lem:k1}\\\so}\input{./figures/\fig/\fig_02.tikz}\quad\mapsfrom\quad\input{./figures/\fig/\fig_03.tikz}
\end{align*}
$\bullet$ $4$ comes from the spider rule \so.\\
$\bullet$ $5a$: We will need a few steps to prove this equality.\\
\step \label{step:one}
\def\fig{anti-control-pi-and-control-triangle-commute}
\begin{align*}
\input{./figures/\fig/\fig_00.tikz}
\eq[]{\so\\\bt}\input{./figures/\fig/\fig_01.tikz}
\eq[]{\h\\\so}\input{./figures/\fig/\fig_02.tikz}
\eq[]{\ref{lem:triangle-hadamard-parallel}}\input{./figures/\fig/\fig_03.tikz}
\eq[]{\so\\\ref{lem:k1}}\input{./figures/\fig/\fig_04.tikz}
\end{align*}
\step \label{step:two}
\def\fig{CNOT-control-triangle-CNOT-is-control-transpose-triangle}
\begin{align*}
\input{./figures/\fig/\fig_00.tikz}
\eq[]{\ref{lem:black-dot-swappable-outputs}\\\so}\input{./figures/\fig/\fig_01.tikz}
\eq[]{\bt}\input{./figures/\fig/\fig_02.tikz}
\eq[]{\so\\\ref{lem:triangles-fork-absorbs-anti-CNOT}}\input{./figures/\fig/\fig_03.tikz}
\eq[]{\ref{lem:k1}}\input{./figures/\fig/\fig_04.tikz}
\eq[]{\ref{lem:black-dot-swappable-outputs}\\\ref{lem:k1}\\\ref{lem:not-triangle-is-symmetrical}}\input{./figures/\fig/\fig_05.tikz}
\end{align*}
\step \label{step:three}
\def\fig{anti-control-inverse-triangle-is-control-triangle-times-inverse-triangle}
\begin{align*}
\input{./figures/\fig/\fig_00.tikz}
\eq{\ref{lem:not-ug-is-symmetrical}}\input{./figures/\fig/\fig_01.tikz}
\eq{\so\\\ref{lem:n-four}}\input{./figures/\fig/\fig_02.tikz}
\eq{\ref{lem:k1}\\\so}\input{./figures/\fig/\fig_03.tikz}
\end{align*}
\step \label{step:three-bis}
\def\fig{anti-control-inverse-triangle-is-control-triangle-times-inverse-triangle-bis}
\begin{align*}
\input{./figures/\fig/\fig_00.tikz}
\eq{\ref{lem:k1}\\\so}\input{./figures/\fig/\fig_01.tikz}
\eq{\ref{step:three}}\input{./figures/\fig/\fig_02.tikz}
\eq{\ref{lem:k1}\\\so}\input{./figures/\fig/\fig_03.tikz}
\end{align*}
\step \label{step:four}
\def\fig{anti-control-triangle-and-CNOT-commute}
\begin{align*}
\input{./figures/\fig/\fig_00.tikz}
\eq{\ref{lem:black-dot-swappable-outputs}\\\so}\input{./figures/\fig/\fig_01.tikz}
\eq{\bt}\input{./figures/\fig/\fig_02.tikz}
\eq{\ref{lem:k1}\\\so}\input{./figures/\fig/\fig_03.tikz}\\
\eq{\ref{lem:triangles-fork-absorbs-anti-CNOT}}\input{./figures/\fig/\fig_04.tikz}
\eq{\ref{lem:k1}}\input{./figures/\fig/\fig_05.tikz}
\end{align*}
\step \label{step:five}
\def\fig{anti-control-inverse-triangle-and-CNOT-commute}
\begin{align*}
\input{./figures/\fig/\fig_00.tikz}
\eq{\so\\\ref{lem:k1}}\input{./figures/\fig/\fig_01.tikz}
\eq{\ref{step:four}}\input{./figures/\fig/\fig_02.tikz}
\eq{\ref{lem:k1}\\\so}\input{./figures/\fig/\fig_03.tikz}
\end{align*}
\step \label{step:six}
\def\fig{control-not-ug-is-symmetrical-proof}
\begin{align*}
\input{./figures/\fig/\fig_00.tikz}
\eq{\ref{step:one}\\\ref{lem:black-dot-swappable-outputs}}\input{./figures/\fig/\fig_01.tikz}
\eq{\ref{lem:n-five}}\input{./figures/\fig/\fig_02.tikz}
\eq{\ref{step:three}\\\ref{lem:k1}\\\ref{lem:hopf}}\input{./figures/\fig/\fig_03.tikz}\\
\eq{\ref{step:five}}\input{./figures/\fig/\fig_04.tikz}
\eq{\ref{step:three-bis}}\input{./figures/\fig/\fig_05.tikz}
\eq{\ref{step:two}}\input{./figures/\fig/\fig_06.tikz}
\end{align*}
\step \label{step:seven}
\def\fig{2-diagrams-of-control-triangle}
\begin{align*}
\input{./figures/\fig/\fig_00.tikz}
\eq{\so\\\ref{lem:k1}\\\ref{lem:not-triangle-is-symmetrical}}\input{./figures/\fig/\fig_01.tikz}
\eq{\bt}\input{./figures/\fig/\fig_02.tikz}
\eq{\so}\input{./figures/\fig/\fig_03.tikz}
\eq{\bt}\input{./figures/\fig/\fig_04.tikz}\\
\eq{\ref{lem:looped-triangle}\\\ref{lem:black-dot-swappable-outputs}}\input{./figures/\fig/\fig_05.tikz}
\eq{\ref{lem:triangles-fork-absorbs-anti-CNOT}}\input{./figures/\fig/\fig_06.tikz}
\eq{\bt}\input{./figures/\fig/\fig_07.tikz}
\eq{\ref{lem:looped-triangle}\\\ref{lem:black-dot-swappable-outputs}}\input{./figures/\fig/\fig_08.tikz}
\end{align*}
Finally,
\def\fig{rule-5a-proof}
\begin{align*}
\input{./figures/\fig/\fig_00.tikz}~~\mapsto~~\input{./figures/\fig/\fig_01.tikz}
\eq{}\input{./figures/\fig/\fig_02.tikz}
\eq{\ref{lem:control-pi-and-anti-CNOT-commute}}\input{./figures/\fig/\fig_03.tikz}
\eq{\ref{step:seven} \\\ref{lem:hopf}}\input{./figures/\fig/\fig_04.tikz}\\
\eq{\ref{step:six}}\input{./figures/\fig/\fig_05.tikz}
\eq{\so\\\bt}\input{./figures/\fig/\fig_06.tikz}
\eq{\so\\\ref{step:seven}}\input{./figures/\fig/\fig_07.tikz}~~\mapsfrom\!\!\input{./figures/\fig/\fig_08.tikz}=\input{./figures/\fig/\fig_09.tikz}
\end{align*}
$\bullet$ $5b$:
\def\fig{rule-5b-proof}
\begin{align*}
\input{./figures/\fig/\fig_00.tikz}~~\underset{\ref{lem:arity-1-black-dot}}{\mapsto}~~\input{./figures/\fig/\fig_01.tikz}
\eq{\ref{lem:inverse}\\\bo\\\so}\input{./figures/\fig/\fig_02.tikz}
\eq{\ref{lem:red-state-on-triangle}}\input{./figures/\fig/\fig_03.tikz}
\eq{\ref{lem:inverse}\\\bo}\input{./figures/\fig/\fig_04.tikz}~~\underset{\ref{lem:arity-1-black-dot}}{\mapsfrom}~~\input{./figures/\fig/\fig_05.tikz}
\end{align*}
$\bullet$ $5c$:
\def\fig{rule-5c-proof}
\begin{align*}
\input{./figures/\fig/\fig_00.tikz}~~\underset{\ref{lem:arity-1-black-dot}}{\mapsto}~~\input{./figures/\fig/\fig_01.tikz}
\eq{\so\\\ref{lem:2-is-sqrt2-squared}}\input{./figures/\fig/\fig_02.tikz}
\eq{\ref{lem:inverse}}\input{./figures/\fig/\fig_03.tikz}~~\mapsfrom~~\input{./figures/\fig/\fig_03.tikz}
\end{align*}
$\bullet$ $5d$:
\def\fig{rule-5d-proof}
\begin{align*}
\input{./figures/\fig/\fig_00.tikz}~~\mapsto~~\input{./figures/\fig/\fig_01.tikz}
\eq{\ref{lem:hopf}}\input{./figures/\fig/\fig_02.tikz}
\eq{\st\\\so}\input{./figures/\fig/\fig_03.tikz}
\eq{\ref{lem:triangles-fork-absorbs-anti-CNOT}}\input{./figures/\fig/\fig_04.tikz}
\eq{\ref{lem:inverse}\\\ref{lem:hopf}}\input{./figures/\fig/\fig_05.tikz}\\
\eq{\ref{lem:pi-red-state-on-triangle}\\\ref{lem:inverse}}\input{./figures/\fig/\fig_06.tikz}
\eq{\ref{lem:inverse}\\\ref{lem:pi-green-state-on-upside-down-triangle}}\input{./figures/\fig/\fig_07.tikz}
\eq{\ref{lem:inverse}\\\bo}\input{./figures/\fig/\fig_08.tikz}~~\underset{\ref{lem:arity-1-black-dot}}{\mapsfrom}~~\input{./figures/\fig/\fig_09.tikz}
\end{align*}
$\bullet$ $6a$: Thanks to the rule X we can get rid of \resizebox{!}{1em}{
\InputIfFileExists{control-pi.tikz}{}{\input{./figures/control-pi.tikz}}
} induced by the crossing. Then,
\def\fig{rule-6a-proof-no-braid}
\begin{align*}
\input{./figures/\fig/\fig_00.tikz}~~\underset{\text{X}}{\mapsto}~~\input{./figures/\fig/\fig_01.tikz}
\eq{\bt}\input{./figures/\fig/\fig_02.tikz}
\eq{\so}\input{./figures/\fig/\fig_03.tikz}
\eq{\ref{lem:parallel-triangles}}\input{./figures/\fig/\fig_04.tikz}~~\mapsfrom~~\input{./figures/\fig/\fig_05.tikz}
\end{align*}
$\bullet$ $6b$ is exactly the copy rule \bo.\\
$\bullet$ $6c$:
\def\fig{rule-6c-proof}
\begin{align*}
\input{./figures/\fig/\fig_00.tikz}~~\mapsto~~\input{./figures/\fig/\fig_01.tikz}
\eq{\so\\\ref{lem:inverse}\\\ref{lem:hopf}}\input{./figures/\fig/\fig_02.tikz}
\eq{\ref{lem:inverse}\\\bo\\\st}\input{./figures/\fig/\fig_03.tikz}
\eq{\ref{lem:red-state-on-triangle}}\input{./figures/\fig/\fig_04.tikz}~~\underset{\ref{lem:arity-1-black-dot}}{\mapsfrom}\input{./figures/\fig/\fig_05.tikz}
\end{align*}
$\bullet$ $7a$:
\def\fig{rule-7a-proof-no-braid}
\begin{align*}
\input{./figures/\fig/\fig_00.tikz}~~\mapsto~~\input{./figures/\fig/\fig_01.tikz}
\eq[]{\so\\\h}\input{./figures/\fig/\fig_02.tikz}
\eq[]{\bt}\input{./figures/\fig/\fig_03.tikz}
\eq[]{\h}\input{./figures/\fig/\fig_04.tikz}~~\mapsfrom~~\input{./figures/\fig/\fig_05.tikz}
\end{align*}
$\bullet$ $7b$: using \ref{lem:k1}, \h and \so:
\def\fig{rule-7b-proof-no-braid}
\begin{align*}
\input{./figures/\fig/\fig_00.tikz}~~\mapsto~~\input{./figures/\fig/\fig_01.tikz}
\eq{\ref{lem:k1}\\\h\\\so}\input{./figures/\fig/\fig_02.tikz}
~~\mapsfrom~~\input{./figures/\fig/\fig_04.tikz}
\end{align*}
$\bullet$ R$_2$:
\def\fig{rule-r2-proof}
\begin{align*}
\input{./figures/\fig/\fig_00.tikz}~~\mapsto~~\input{./figures/\fig/\fig_01.tikz}
\eq{\so}\input{./figures/\fig/\fig_02.tikz}
\eq{\ref{lem:hopf}}\input{./figures/\fig/\fig_03.tikz}~~\mapsfrom~~\input{./figures/\fig/\fig_03.tikz}
\end{align*}
$\bullet$ R$_3$:
\def\fig{rule-r3-proof}
\begin{align*}
\input{./figures/\fig/\fig_00.tikz}~~\mapsto~~\input{./figures/\fig/\fig_01.tikz}
\eq{\so}\input{./figures/\fig/\fig_02.tikz}
\eq{\so}\input{./figures/\fig/\fig_03.tikz}~~\mapsfrom~~\input{./figures/\fig/\fig_04.tikz}
\end{align*}
$\bullet$ $iv$: using \sth, \so, \ref{lem:2-is-sqrt2-squared} and \ref{lem:inverse},
\def\fig{rule-half-proof}
\begin{align*}
\input{./figures/\fig/\fig_00.tikz}~~\mapsto~~\input{./figures/\fig/\fig_01.tikz}
\eq{\sth}\input{./figures/\fig/\fig_02.tikz}
\eq{\so}\input{./figures/\fig/\fig_03.tikz}
\eq{\ref{lem:2-is-sqrt2-squared}\\\ref{lem:inverse}}\input{./figures/\fig/\fig_04.tikz}
~~\mapsfrom~~\input{./figures/\fig/\fig_05.tikz}
\end{align*}
\qed

\subsubsection{Proof of Proposition \ref{prop:left-inverse-ZX}}
\label{prf:left-inverse-ZX}

Let us write $\dblinterp{.}=\interpwx{\interpxw{.}}$.
We can show inductively that:
\[\zxt\vdash \dblinterp{D}\circ\left(
\InputIfFileExists{top-composition.tikz}{}{\input{./figures/top-composition.tikz}}
\right) = D\otimes \left(
\InputIfFileExists{theta.tikz}{}{\input{./figures/theta.tikz}}
\right)\]
which is the expression of Proposition \ref{prop:left-inverse}.\\
$\bullet$ The result is obvious for the generators 
\InputIfFileExists{empty-diagram.tikz}{}{\input{./figures/empty-diagram.tikz}}
, 
}
, 
\InputIfFileExists{crossing.tikz}{}{\input{./figures/crossing.tikz}}
, 
}
, and 
}
.\\
$\bullet$ 
}
:
\def\fig{hadamard-double-interpretation}
\begin{align*}
\interpwx{~
\InputIfFileExists{hadamard-interpretation-2-no-braid.tikz}{}{\input{./figures/hadamard-interpretation-2-no-braid.tikz}}
~}\hspace{-1em} \eq{} \input{./figures/\fig/\fig_00.tikz}
\eq{\so\\\ref{lem:inverse}}\input{./figures/\fig/\fig_01.tikz}
\eq{\h\\\ref{lem:k1}}\input{./figures/\fig/\fig_02.tikz}
\end{align*}
and, using \so, \eu, \ref{lem:inverse}, \h, \ref{lem:hopf} and \ref{lem:green-state-pi_2-is-red-state-minus-pi_2}:
\def\fig{sqrt2-double-interp-times-theta}
\begin{align*}
\input{./figures/\fig/\fig_00.tikz}
\eq{\kt\\\st\\\so\\\eu}\input{./figures/\fig/\fig_01.tikz}
\eq{\h\\\so}\input{./figures/\fig/\fig_02.tikz}
\eq{\ref{lem:inverse}\\\ref{lem:hopf}}\input{./figures/\fig/\fig_03.tikz}
\eq{\ref{lem:green-state-pi_2-is-red-state-minus-pi_2}}\input{./figures/\fig/\fig_04.tikz}
\end{align*}
Hence $ ZX\vdash \dblinterp{
}
}\hspace{0em}\circ\left(
\InputIfFileExists{top-composition-1.tikz}{}{\input{./figures/top-composition-1.tikz}}
\right) ~=~~ 
}
\hspace{0.1em}\raisebox{-0.7em}{
\InputIfFileExists{theta.tikz}{}{\input{./figures/theta.tikz}}
}$\\~~\\\\
$\bullet \interpwx{~
\InputIfFileExists{gn-0-1-m-interpretation.tikz}{}{\input{./figures/gn-0-1-m-interpretation.tikz}}
~~}\hspace{-1.5em}\circ\left(
\InputIfFileExists{top-composition-1.tikz}{}{\input{./figures/top-composition-1.tikz}}
\right) = 
\InputIfFileExists{gn-0-1-m-theta.tikz}{}{\input{./figures/gn-0-1-m-theta.tikz}}
$\\~~\\\\
$\bullet \interpwx{~
\InputIfFileExists{gn-0-n-1-interpretation.tikz}{}{\input{./figures/gn-0-n-1-interpretation.tikz}}
~~}\hspace{-1.5em}\circ\left(
\InputIfFileExists{top-composition.tikz}{}{\input{./figures/top-composition.tikz}}
\right) = 
\InputIfFileExists{gn-0-n-1-theta.tikz}{}{\input{./figures/gn-0-n-1-theta.tikz}}
$\\~~\\\\
$\bullet$ 
}
:
\def\fig{gn-pi_4-double-interpretation}
\begin{align*}
\interpwx{
\InputIfFileExists{gn-pi_4-interpretation-2-no-braid.tikz}{}{\input{./figures/gn-pi_4-interpretation-2-no-braid.tikz}}
} \!\!\!\!\eq{} \input{./figures/\fig/\fig_00.tikz}
\eq{\ref{lem:inverse}\\\so\\\h\\\ref{lem:k1}\\\bo}\input{./figures/\fig/\fig_01.tikz}
\end{align*}
But:
\def\fig{gn-state-pi_2-on-CRG-is-control-pi_2}
\begin{align*}
\input{./figures/\fig/\fig_00.tikz}
\eq{\ref{def:triangle}}\input{./figures/\fig/\fig_01.tikz}
\eq{\ref{lem:supp-to-minus-pi_4}\\\ref{lem:euler-decomp-with-scalar}}\input{./figures/\fig/\fig_02.tikz}
\eq{\h\\\so}\input{./figures/\fig/\fig_03.tikz}\\
\eq{\ref{lem:C1-original}}\input{./figures/\fig/\fig_04.tikz}
\eq{\h\\\bo\\\so\\\st\\\kt}\input{./figures/\fig/\fig_05.tikz}
\end{align*}
So that:
\def\fig{gn-pi_4-double-interp-times-theta}
\begin{align*}
\input{./figures/\fig/\fig_00.tikz}
\eq{\so\\\eu}\input{./figures/\fig/\fig_01.tikz}
\eq{\ref{lem:hopf}\\\ref{lem:not-triangle-is-symmetrical}}\input{./figures/\fig/\fig_02.tikz}
\eq{}\input{./figures/\fig/\fig_03.tikz}\\
\eq{\so\\\bt}\input{./figures/\fig/\fig_04.tikz}
\eq{\so}\input{./figures/\fig/\fig_05.tikz}
\eq{\bo\\\so}\input{./figures/\fig/\fig_06.tikz}
\end{align*}
which means $ \zxt\vdash \dblinterp{
}
}\circ\left(
\InputIfFileExists{top-composition-1.tikz}{}{\input{./figures/top-composition-1.tikz}}
\right) ~=~~ 
}
\hspace{0.1em}\raisebox{-0.7em}{
\InputIfFileExists{theta.tikz}{}{\input{./figures/theta.tikz}}
}$\\~~\\\\
$\bullet$ $D_1\circ D_2$:\\
It is to be noticed that $\interpwx{D_1\circ D_2}=\interpwx{D_1}\circ \interpwx{D_2}$ and $\interpwx{D_1\otimes D_2}=\interpwx{D_1}\otimes \interpwx{D_2}$.\\ Let us write $\theta = 
\InputIfFileExists{theta.tikz}{}{\input{./figures/theta.tikz}}
$. Then:
\begin{align*}
\zxt\vdash \dblinterp{D_1\circ D_2}\circ(\mathbb{I}\otimes \theta) &= \dblinterp{D_1}\circ \dblinterp{D_2}\circ(\mathbb{I}\otimes \theta)=  \dblinterp{D_1}\circ (D_2\otimes \theta) \\&= \dblinterp{D_1}\circ(\mathbb{I}\otimes \theta)\circ D_2 = (D_1\otimes\theta)\circ D_2 \\
&= (D_1\circ D_2)\otimes \theta
\end{align*}
$\bullet$ $D_1\otimes D_2$:
\begin{align*}
\zxt\vdash&\dblinterp{D_1\otimes D_2}\circ(\mathbb{I}\otimes \theta)\\
&=\left( \interpwx{\mathbb{I}^{\otimes n'}}\hspace{-0.8em}\otimes\dblinterp{D_2}\right)\circ\interpwx{\nmcrossI{m}{n'}}\hspace{-1.6em} \circ \left(\interpwx{\mathbb{I}^{\otimes m}\vphantom{\rule{1pt}{1.2em}}}\hspace{-0.8em}\otimes\dblinterp{D_1}\right)\circ\interpwx{\nmcrossI{n}{m}}\hspace{-1.6em}\circ(\mathbb{I}\otimes \theta)\\
&= \left( \mathbb{I}^{\otimes n'}\otimes\dblinterp{D_2}\right)\circ\left(\nmcrossI{m}{n'}\right)\circ \left(\mathbb{I}^{\otimes m}\otimes\dblinterp{D_1}\right)\circ\left(\nmcrossI{n}{m}\right)\circ(\mathbb{I}\otimes \theta)\\
&=\left( \mathbb{I}^{\otimes n'}\otimes\dblinterp{D_2}\right)\circ\left(\nmcrossI{m}{n'}\right)\circ \left(\mathbb{I}^{\otimes m}\otimes(\dblinterp{D_1}\circ(\mathbb{I}\otimes \theta))\right)\circ\left(\nmcross{n}{m}\right)\\
&=\left( \mathbb{I}^{\otimes n'}\otimes\dblinterp{D_2}\right)\circ\left(\nmcrossI{m}{n'}\right)\circ \left(\mathbb{I}^{\otimes m}\otimes D_1\otimes \theta\right)\circ\left(\nmcross{n}{m}\right)\\
&=\left( \mathbb{I}^{\otimes n'}\otimes\dblinterp{D_2}\right)\circ\left(\nmcrossI{m}{n'}\right)\circ(\mathbb{I}\otimes \theta)\circ \left(\mathbb{I}^{\otimes m}\otimes D_1\right)\circ\left(\nmcross{n}{m}\right)\\
&=\left( \mathbb{I}^{\otimes n'}\otimes D_2\otimes\theta\right)\circ\left(\nmcross{m}{n'}\right)\circ \left(\mathbb{I}^{\otimes m}\otimes D_1\right)\circ\left(\nmcross{n}{m}\right)\\
&=\left[\left( \mathbb{I}^{\otimes n'}\otimes D_2\right)\circ\left(\nmcross{m}{n'}\right)\circ \left(\mathbb{I}^{\otimes m}\otimes D_1\right)\circ\left(\nmcross{n}{m}\right)\right]\otimes\theta\\
&=D_1\otimes D_2\otimes\theta\\
\end{align*}
By compositions, for any diagram $D$, $\zxt\vdash \dblinterp{D}\circ(\mathbb{I}\otimes \theta) = D\otimes\theta$. Then, using Lemmas \ref{lem:bicolor-0-alpha} and \ref{lem:inverse}:
\begin{align*}
\forall D\in \zxt,\quad
\zxt\vdash \left(\hspace{-0.2em}\scalebox{0.8}{
\InputIfFileExists{bottom-composition.tikz}{}{\input{./figures/bottom-composition.tikz}}
}\hspace{-0.3em}\right)\circ\interpwx{\interpxw{D}}\circ\left(\hspace{-0.2em}\scalebox{0.8}{
\InputIfFileExists{top-composition.tikz}{}{\input{./figures/top-composition.tikz}}
}\hspace{-0.3em}\right)=D
\end{align*}
\qed

\end{document}